\documentclass[pra,8pt,twocolumn,superscriptaddress,floatfix,showpacs]{revtex4-2}
\usepackage{tikz}
\usepackage{mathtools}
\usetikzlibrary{arrows.meta, positioning,shapes,calc}
\usepackage[utf8]{inputenc}
\usepackage[T1]{fontenc}     %Output what you want e.g., é, l, a, ü
\usepackage[british]{babel}  %Do hyphenation according to british english
\usepackage[sc,osf]{mathpazo}
\usepackage[scaled=0.86]{berasans}  % URL font that go well wtih palatino
\usepackage[colorlinks=true, citecolor=blue, urlcolor=blue]{hyperref}  %Hyperlinks (pink, green, blue)
\usepackage{graphicx} % Package to insert exteral figures
\usepackage{subfig}
\usepackage[babel]{microtype}  %Improves text justification
\usepackage{amsmath,amssymb,amsthm,bm,amsfonts,mathrsfs,bbm} %Usefull math packages
\usepackage{comment}

\usepackage{xspace}  %Useful to add space in macros
\usepackage{pgfplots}
\usepackage{xcolor,colortbl}
\def\ba{\begin{equation}}
	\def\ea{\end{equation}}
\def\bea{\begin{eqnarray}}
	\def\eea{\end{eqnarray}}
\def\ben{\begin{equation*}}
	\def\een{\end{equation*}}
\def\bean{\begin{eqnarray*}}
	\def\eean{\end{eqnarray*}}
\def\bma{\begin{mathletters}}
	\def\ema{\end{mathletters}}
\def\bi{\begin{itemize}}
	\def\ei{\end{itemize}}

\newcommand{\be}{\begin{equation}}
	\newcommand{\ee}{\end{equation}}

\newcommand{\kommentar}[1]{}

\newcommand{\forget}[1]{}

\newtheorem{theorem}{Theorem}

\newtheorem{corollary}{Corollary}[theorem]

\begin{document}
	
	\title{%Genuine Network Nonlocality Offers Better Security In Quantum Key Distribution Protocol\\
		%	Full Network Nonlocality Offers Better Security In Quantum Key Distribution\\
		%Role Of Full Network Nonlocality In Quantum Key Distribution
		Security Analysis of Ekert-type QKD Protocol Coming Under Cloning Attacks}
	%#####################################
	\author{Kaushiki Mukherjee}
	\email{kaushiki.wbes@gmail.com}
	\affiliation{Department of Mathematics, Jhargram Raj College,Jhargram, West Bengal 721507, India.}
	\author{Sumana Karmakar}
	\email{sumanakarmakar88@gmail.com}
	\affiliation{Department of Mathematics, Chandrapur College, Chandrapur, Purba Bardhaman, West Bengal-713145, India.}
	%########################################
	
	\begin{abstract}
		We analyze the security of an Ekert-type bipartite key distribution protocol coming under cloning attacks by an un-trusted observer. Precisely, an eavesdropper is assumed to intercept and apply an asymmetric, phase-variant cloner on one or both qubits which are transmitted over public channel from an authenticated source to the trusted parties. We partially characterize two-qubit states from the perspective of designing a QKD protocol that can generate a distillable secret key among the trusted users even under such cloning attacks. We rely upon Devetak–Winter framework for computing key rate so as to identify states, which when used in the protocol, secret-key can be distilled. Utility of an entangled state, for designing a successful QKD protocol, changes from single-qubit to double-qubits cloning attacks. Interestingly, we can distinguish some useful families of two-qubit entangled states when only one of the transmitted qubits is cloned by a malicious third party.
		% For practical relevance, we further investigate detection efficiency of our protocol. 
	\end{abstract}
	%$4$-partite 
	%\pacs{03.65.Ud, 02.50.Le, 03.67.Ac}
	%\pacs{03.65.Ud, 03.67.Mn}
	\maketitle
	
	%########################################
	
	\section{Introduction}\label{intro}
	Cryptography is one of the branches of information science where quantum theory offers promising advantages over traditional classical theory\cite{QC1,wies,QC2,retsispg10}. Such advantages basically stem from non-classical behavior, more precisely, intrinsic randomness of quantum particles\cite{bb84,eck,cr,cr1,cr2,cr3,cr4,cr5,cr6}. Study of quantum non-classical correlations from perspective of cryptography thus plays a pragmatic role in developing theoretical base for quantum cryptography(QC). 
	\par Over last decade, with advancement of quantum technology, QC has undergone rapid development. Analogous to its classical counterpart, generation and distribution of private key in between trusted parties distribution is an inherent constitution of QC. Any quantum key distribution(QKD) protocol generates a private key shared between trusted parties who can communicate among themselves over public channels\cite{retsispg10}.
	\par \textit{BB84 protocol}, designed by Bennett and Brassard pioneered the study of QC\cite{bb84}. Since then study of QC has undergone multifaceted exploration. Existing quantum key distribution(QKD) protocols can be divided in two classes:\textit{preparation and measurement type}\cite{bb84,cr,cr1,cr2,cr3,cr4,cr5,cr6} and \textit{entanglement assisted type}\cite{eck,eck3,eck1,eck2,eck4,eck6,eck5,eck7,eck11,eck12}. While BB84 protocol belongs to the first class, entanglement assisted type protocol was first introduced by Ekert\cite{eck}. Since then many other QKD protocols have been designed following Ekert's protocol\cite{eck3,eck1,eck2,eck4,eck6,eck5,eck7,eck14,eck13}. Advantage of this class of protocols over classical ones is due to entanglement of quantum particles shared in between the trusted parties($A$ and $B$,say). In any such QKD protocol, $A$ and $B$ share multiple copies of an entangled state\cite{hor1} over which they perform some local measurements in mutually unbiased bases(MUBs\cite{mub1}). They use a fraction of the measurement outcomes for validation check.  Remaining outcomes are next used for public reconciliation of their measurement bases and then forming a raw key\cite{bran1} from the outcomes of identical MUBs. Finally the raw key is used to extract a secure key via suitable information reconciliation\cite{eck8i} and privacy amplification\cite{eck8i} strategies. 
	\par Ideally pure entangled state is supposed to be distributed from a source to $A,B$(trusted parties). However, in practical situations, an un-trusted party(Eve,say) can have control over the source, communication channels and even the measurement devices of $A$ and $B.$ Adversarial control over the QKD protocol hampers the security of the private key. It was first pointed out be Ekert\cite{eck} that $A$ and $B$ can detect any malicious third party via violation of Bell-CHSH inequality. Precisely, absence of a Bell inequality's violation indicates  eavesdropper's presence. Since its inception, such Bell inequality's violation based verification strategy has been exploited to design several secure QKD protocols\cite{eck1,eck2,eck3,eck4,eck5,eck6,eck7,main,selfq,selfq1}. 
	\par Over last few years, some research works have provided state-dependent analysis of QKD protocols where the authors have also exploited Bell-type inequality violation for minimizing Quantum Bit Error Rate(QBER) generated in such protocols\cite{main,selfq,selfq1}. Precisely, these works have characterized two-qubit state space based on their utility in designing entanglement assisted type QKD protocols considering Bell type inequality violation and QBER lying below some threshold limit as two check steps. However, these two check steps, being only necessary to guarantee absence of any adversary, can only act as validation checks and therefore cannot provide unconditional security. Consequently, it may happen that no secure private key can be ultimately distilled using the protocol. In this context it becomes pertinent to refine existing state space characterization so as to identify states useful for designing QKD protocol from which secure key can be distilled. But for that one needs to consider an adversary-aware analysis of such QKD protocols. Present work will be contributory in this direction.
	\par Let us consider a bipartite QKD protocol where Eve has no control over the source($\mathbf{S}$), distributing entangled particles, but adopts some cloning strategy(to be discussed later) over one of the qubits of the entangled state distributed by $\mathbf{S}$ in between two trusted users $A,B.$ For a particular instance, let a two-qubit state $\rho$ with correlation tensor $\small{T}$$=$$\textmd{\small{diag}}(0.8,$$-$$0.8,0.82)$ and with both local Bloch vectors given by $(0,0,0.18)$ be shared between $A,B.$ As per characterization provided in \cite{main}, Bell-CHSH violation is observed in QKD protocol using $\rho$ and also QBER is less than $14\%.$ So the protocol using $\rho$ is not aborted. However, no bipartite secret key can be distilled from the protocol as Devetak-Winter key rate\cite{dw1} turns out to be negative($-$$0.11024$bits). So $\rho$ cannot be used to design QKD protocol generating a bipartite secure key though the protocol passes both the validation checks. Above illustration clearly indicates the necessity of further analyzing two-qubit state space so as to segregate states based on their utility to design protocols via which private key can be distilled. This motivates the present work.
	\par Comparison of trusted parties information content with that of malicious party is the most obvious way to verify security in a QKD protocol\cite{eck18}. With such a concept in hindsight, we consider $1$-way secret key distillation criterion for providing an adversary-aware analysis of arbitrary two-qubit states in context of designing QKD protocol\cite{main}. In general, for any un-trusted party having control over the source, the worst possible situation occurs when Eve holds purification of the state shared between the trusted parties. However, characterizing states in such a general scenario becomes complicated for obvious reasons. Here we consider an explicit quantum adversarial framework where Eve can clone one or both qubits distributed to the trusted users. Intending to add an operational interpretation of existing geometric characterization of states\cite{main} by using some practically relevant parameters such as $1$-way secret-key rate, quantity of information leakage, here we refrain ourselves to a specific type of eavesdropping strategy. 
	\par  \textit{QKD Protocol:} We explicitly use the $1$-way key distillation protocol($\mathcal{P}_{QKD},$say) framed in \cite{main}. Let us first briefly state the steps of $\mathcal{P}_{QKD}$:
	\begin{enumerate}
		\item \textit{Qubits Distribution Stage:} A source($\mathbf{S}$) distributes $n$ copies of a bipartite entangled state $\varrho_{AB}$ between $A,B.$
		\item \textit{Measurement Stage:} Each of $A,B$ performs the following measurements:
		\begin{enumerate}
			\item Each of $A,B$ measure $h$$<$$n$ copies of single qubit of $\varrho_{AB}$ in one of two randomly chosen two-dimensional MUBs. 
			\item On each of remaining $n$$-h$ copies each of $A,B$ performs projective measurements randomly in any one of two arbitrary directions.
		\end{enumerate}
		\item \textit{Measurement Setting Disclosure Stage:}
		They broadcast the type of measurement(step 2(a) or (2b)) performed in each of $n$ rounds. $A,B$ only keep the rounds in which the measurement types match. Let $s_1,s_2$ denote the successful rounds corresponding to first type and second type of measurement. Then 
		\begin{eqnarray*}
			s_1&<&h\,\,\textmd{and } s_2<n-h.
			%r_2&<&\textmd{Min}_{i=1}^4n-h_i=n-\textmd{Max}_{i=1}^4h_i.
		\end{eqnarray*}
		\item \textit{Bell-CHSH Inequality Testing Stage:} Using bipartite measurement statistics resulting from local projective measurements in $s_2$ rounds, $A,B$ test Bell-CHSH inequality. In case violation is observed next step is executed.Otherwise it is aborted.
		\item \textit{Sifting Stage:} $A,B$ publicly announce their chosen MUBs for each of $s_1$ rounds in step (2a). Particularly:
		\begin{itemize}
			\item \textit{Bases Reconciliation:} $A,B$ discard rounds in which their MUBs do not match. Let $r_1$$<$$s_1$ denote the rounds where $A,B$ have chosen identical MUBs.
			\item \textit{Sifting Keys Generation:} Each MUB being two-dimensional, the outputs of the identical MUBs of each of $A,B$ form a bit string of length $r_1.$ These bit strings are often referred to as the \textit{raw/sifted keys.}
			\item \textit{QBER Generation:} $\varrho_{AB}$ is an entangled state. Hence, ideally, in each of $r_1$ rounds in the sifting stage, $A,B$ are supposed to obtain correlated outputs from local measurements in their identical MUBs. But this does not happen in practical situations leading to generation of QBER($\mathcal{Q}$) in the protocol. $A,B$ use a fraction of their sifted keys to compute $\mathcal{Q}$. If $\mathcal{Q}$$>$$\mathcal{Q}_0$$=$$0.1414$\cite{main} then the protocol is aborted. Otherwise next step is executed.
		\end{itemize}
		\item \textit{Generation of Secret Key:} W.L.O.G., let $A$ process her raw key and send the required classical information to $B$ for error reconciliation. Based on the information received $B$ correct discrepancies between their raw keys. Finally they apply some suitable privacy amplification procedures to the reconciled keys thereby extracting a shorter but secure secret key by suppressing sufficient amount of any un-trusted party's side information.
	\end{enumerate}
	\textit{Validation Checks in $\mathcal{P}_{QKD}$:} For any state $\rho_{AB}$ shared between $A,B$(trusted parties), $\mathcal{P}_{QKD}$ passes the first validation check in $4^{th}$ step if:
	\begin{eqnarray}\label{vald1}
		\lambda_1^2+\lambda_2^2&>&1.
	\end{eqnarray}
	$\lambda_1,\lambda_2$ denote the largest $2$ singular values of $\rho_{AB}^{'}$s correlation tensor. $\mathcal{P}_{QKD}$ passes the second validation check in $5^{th}$ step if\cite{main}:
	\begin{eqnarray}\label{vald2}
		\lambda_1+\lambda_2&>&\sqrt{2}.
	\end{eqnarray}
	\textit{Key Distillation In Presence of Adversary:} Now, let $\mathcal{P}_{QKD}$ comes under some cloning attacks of a malicious third party($E$). Let $\varrho_{ABE}$ be the resulting tripartite state in $\mathcal{P}_{QKD}$. Clearly, $\varrho_E$$=$$\textmd{Tr}_{AB}[\varrho_{ABE}]$ is the state with $E.$ In presence of $E,$ the \textit{Devetak-Winter key rate}($r_D$,say) determines the asymptotic amount
	of secret key that $A,B$ can extract from their raw keys\cite{dw1,dw2} in the last step of $\mathcal{P}_{QKD}$. Above protocol being $1$-way reconciliation type(from $A$ to $B$), Devetak-Winter key rate is given by\cite{dw1}:
	\begin{eqnarray}\label{wr7}
		r_D&=&\frac{1}{3}\sum_{j=1}^3(I_j(A:B)-\chi_j(A:E)).
	\end{eqnarray}
	In Eq.(\ref{wr7}) $I_j(A$$:$$B)$ denote the mutual information between $A,B$ when they measure in $j^{th}$ MUB on $\rho_{AB}^{'}$$=$$\textmd{Tr}_{E}[\varrho_{ABE}]$. MUBs being $2$-dimensional, up to local unitaries, there are only 3 distinct MUBs\cite{mub1}. So $j$ runs over $1,2,3.$ \\
	$\chi_j(A:E)$ denote the Holevo quantity\cite{dw2} limiting $E^{'}$s side information(for any POVM $M_{\mathbf{e}}$ that it measures on $\textmd{Tr}_{AB}[\varrho_{ABE}]$) when the trusted parties have measured in $j^{th}$ MUB:
	\begin{eqnarray}\label{wr3}
		I_j(A:E_{\mathbf{e}})&\geq& \chi_j(A:E)\nonumber\\
		\textmd{\small{where }} \chi_j(A:E)&=&S(\varrho_E)-\sum_{a_j}p(a_j)S(\varrho_E^{(a_j)}
	\end{eqnarray} 
	In Eq.(\ref{wr3}),$ E_e$ stands for the classical variable corresponding to any possible $\{M_e\}$ and $S$ denotes von-neumann entropy\cite{nie}. $\varrho_E^{(a_j)}$ is $E^{'}$s conditional state when $A$ obtains output $a_j$ with probability $p(a_j)$ while measuring her subsystem in $j^{th}$ MUB. \\
	Now as $r_D$ gives the asymptotic amount
	of secret key, $A,B$ can distill a secret key by $\mathcal{P}_{QKD}$ only if\cite{dw1,dw2}: 
	\begin{eqnarray}\label{keyd}
		r_D&>& 0.
	\end{eqnarray}
	Violation of Eq.(\ref{keyd}) implies absence of positive asymptotic key rate\cite{dw1,dw2} and consequently in such case no secret key can be ultimately shared between the trusted parties. This in turn will render $\varrho_{AB}$ to be useless in designing a secure QKD protocol($\mathcal{P}_{QKD}$) although $\mathcal{P}_{QKD}$ will pass both the validation checks(steps $4,5$).  \\
	\textit{Single-Qubit Cloning Strategy By Eve($E$):} $E$ clones one of the qubits distributed from $\mathbf{S}$ to $A$ and $B.$ Precisely, without loss of any generality, let $E$ intercept the qubit sent to $B$ and apply the following asymmetric phase variant cloning\cite{clone}:
	\begin{eqnarray}\label{cln1}
		U_C(|i0\rangle)&=&\cos\frac{\theta}{2}	|i0\rangle+i\sin\frac{\theta}{2} |i\oplus 1 0\rangle,\,\theta\in[0,\frac{\pi}{4}].
	\end{eqnarray}
	Let $\rho_{ABE}$ denote the $3$-qubit state resulting after above operation(Eq.(\ref{cln1})) is applied on $2^{nd}$ qubit of arbitrary two-qubit  $\rho_{AB}$$=$$\sum_{\substack{j,k,\\l,m=0}}^1a_{j,k,l,m}|jk\rangle\langle lm|.$ $E$ keeps the cloned qubit, i.e., $3^{rd}$ qubit of $\rho_{ABE}$(see supplementary material for details) and sends the original qubit($2^{nd}$ of $\rho_{ABE}$) to $B.$ $E$ maintains correlations with the trusted parties through the state $\rho_{ABE}.$ Trusted parties $A,B$ execute $\mathcal{P}_{QKD}$ using  $\rho_{AB}^{'}$$=$$\textmd{Tr}_E[\rho_{ABE}].$ Let $\mathcal{P}_{QKD}^{C_1}$ denote the QKD $\mathcal{P}_{QKD}$ when it comes under the single-qubit cloning attack specified by Eq.(\ref{cln1}).
	\par In $\mathcal{P}_{QKD}^{C_1},$ $E$ retains the state $\rho_E$$=$$\textmd{Tr}_{AB}[\rho_{ABE}].$ Using $\rho_E$ and conditional states of $E$ resulting from $A^{'}$s projective measurement(in anyone of three possible MUBs), $E$ gains side information from $\mathcal{P}_{QKD}^{C_1}.$ Depending on $E^{'}$s information gain and the mutual information shared between $A,B$, secret key can be distilled from $\mathcal{P}_{QKD}^{C_1}$(involving $\rho_{AB}^{'}$) if Eq.(\ref{keyd}) is satisfied. This in turn renders the original state $\rho_{AB},$ distributed from $\mathbf{S},$ to be useful.
	\par Let $\mathbf{S}$ distribute pure state:
	\begin{eqnarray}\label{pure}
		\rho_{AB}&=&|\psi\rangle\langle \psi|\textmd{\small{with}}\\
		|\psi\rangle&=&\cos\alpha |00\rangle+\sin\alpha |11\rangle\,\,0\leq\alpha\leq0.785.
	\end{eqnarray}
	Interestingly, $\mathcal{P}_{QKD}^{C_1}$ can always be used to generate secret key. Precisely, we get the following result:
	\begin{theorem}\label{clo1}
		When $\mathbf{S}$ distribute pure two-qubit entangled state, secret key can always be distilled from $\mathcal{P}_{QKD}^{C_1}$.
	\end{theorem}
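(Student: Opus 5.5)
The plan is a direct computation followed by a two–parameter positivity argument. First I will write down the tripartite state explicitly. I read Eq.(\ref{cln1}) as in the supplementary material, i.e. the cloned qubit is marked in $E$'s register, so that $U_C|i0\rangle=\cos\frac{\theta}{2}|i0\rangle+i\sin\frac{\theta}{2}|i\oplus1\,1\rangle$. If the marking is different, only the bookkeeping below changes, not the strategy. Applying this to the second qubit of $|\psi\rangle$ in Eq.(\ref{pure}) gives the pure three-qubit state
\begin{equation*}
|\Psi\rangle=\cos\alpha|0\rangle\Big(c|00\rangle+is|11\rangle\Big)+\sin\alpha|1\rangle\Big(c|10\rangle+is|01\rangle\Big),
\end{equation*}
where $c=\cos\frac{\theta}{2}$ and $s=\sin\frac{\theta}{2}$. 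The theorem concerns pure \emph{entangled} states, so I take $\alpha\in(0,\pi/4]$ and $\theta\in[0,\pi/4]$. From $|\Psi\rangle$ I will read off $\rho_{AB}^{'}=\textmd{Tr}_E|\Psi\rangle\langle\Psi|$, $\rho_E$, and, for each of the three MUBs $\sigma_z,\sigma_x,\sigma_y$ measured by $A$, the probabilities $p(a_j)$ and $E$'s conditional states $\varrho_E^{(a_j)}$.

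Second, I will evaluate each term of Eq.(\ref{wr7}).
\begin{itemize}
\item \emph{$z$-basis.} Conditioning on $A$'s outcome leaves $BE$ in a pure state, and in both branches $E$'s marginal is $\textmd{diag}(c^2,s^2)$. Hence $\chi_z(A:E)=0$. Meanwhile $I_z(A:B)=h(\cos^2\alpha)-h(s^2)$, where $h$ is the binary entropy, because $B$'s bit is flipped independently with probability $s^2$.
\item \emph{$x$ and $y$ bases.} The conditional states of $E$ are $2\times2$ matrices whose entries are explicit in $\alpha,\theta$. I will compute their eigenvalues in closed form and obtain $\chi_x,\chi_y$ as differences of binary entropies. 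The mutual informations $I_x,I_y$ follow from the correlation tensor of $\rho_{AB}^{'}$, whose entries are $\sin2\alpha$ damped by factors depending on $\theta$, together with the marginals.
\end{itemize}
This yields an explicit function $r_D(\alpha,\theta)$.

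Third, I will prove $r_D>0$ on the domain. I will bound $\chi_x,\chi_y$ from above by $S(\rho_E)\le h(s^2)$, which is small since $s^2\le\sin^2(\pi/8)\approx0.146$. I will bound $I_x,I_y$ from below using $I\ge 1-h(\epsilon)$-type estimates, where $\epsilon$ is the bit-disagreement probability in that basis. The monotonicity of these bounds in $\theta$ should reduce the problem to the worst case $\theta=\pi/4$, leaving a one-variable inequality in $\alpha$.

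The main obstacle is the regime $\alpha\to0$, where the state is barely entangled and every term tends to $0$. There the crude bounds may fail. I will instead expand $r_D$ in powers of $\alpha$ and show that its leading term is positive. The favourable sign should come from $\chi_z=0$ combined with $I_z\approx h(\cos^2\alpha)$ minus a small penalty. If the analytic bounds are not tight enough at intermediate $\alpha$, I will fall back on certifying the one-variable function numerically on a compact grid with Lipschitz control.
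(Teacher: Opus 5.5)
Your skeleton matches the paper's. The paper also writes $r_D$ as an explicit function $f_1(\theta,\alpha)$ (Eq.~(\ref{theo16})) and then only plots it. A real asymptotic treatment of $\alpha\to0$, where $f_1\to0$ and a plot certifies nothing, would therefore strengthen the result. However, the facts your small-$\alpha$ argument rests on are wrong.

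\emph{The formula for $I_z$.} Even for your own isometry, $I_z(A:B)\neq h(\cos^2\alpha)-h(s^2)$. For a binary symmetric channel with crossover $s^2$, $I_z=H(B)-H(B|A)=h(c^2\cos^2\alpha+s^2\sin^2\alpha)-h(s^2)$; you used $H(A)$ in place of $H(B)$. Your expression tends to $-h(s^2)$ as $\alpha\to0$, and $h(s^2)$ is not a small penalty (about $0.60$ bits at $\theta=\pi/4$). Expanding your $r_D$ would therefore give a negative leading term, refuting the theorem rather than proving it. With the correct $I_z$, for fixed $\theta\in(0,\pi/4]$, one finds $\chi_y=\chi_z=0$ and
$3r_D\approx\alpha^2[\frac{2(1+\cos^2\theta)}{\ln2}+\frac{\cos2\theta}{\cos\theta}\log_2\frac{c^2}{s^2}]$, where the second term is $I_z-\chi_x$. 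Positivity comes from $I_x+I_y$ and from $\cos2\theta\ge0$, not from $\chi_z=0$. The bound $\chi_x+\chi_y\le2h(s^2)$ is also far too crude beyond small $\alpha$. In your model at $(\alpha,\theta)=(\pi/8,\pi/4)$ it gives about $1.20$, against $I_x+I_y+I_z\approx0.80$. Almost everything would then rest on your numerical fallback.

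\emph{The cloner.} Your isometry is not the one the paper's proof uses. In the supplementary material the factor $i$ in front of $\sin\frac{\theta}{2}$ is the bit value, not $\sqrt{-1}$; see the factors $k,m$ in Eq.~(\ref{arb2}) and $\cos^{i+k}\frac{\theta_1}{2}$ in the two-qubit expansion. The map is $|00\rangle\mapsto|00\rangle$, $|10\rangle\mapsto c|10\rangle+s|01\rangle$, and only this reproduces Eq.~(\ref{theo12}), the stated spectrum of $\rho_E$, and Eq.~(\ref{theo15}).

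This is not a bookkeeping change. In the paper's model $\chi_z=h(s^2\sin^2\alpha)-\sin^2\alpha\,h(s^2)>0$. Write $x=\sin^2\alpha$. Then $I_z-\chi_z=h(c^2x)-h(s^2x)$ and $\chi_x=\chi_y=h(s^2x)-O(x^2\log\frac{1}{x})$ are all of order $x\log\frac{1}{x}$, while $I_x=I_y=O(x)$. Hence $3r_D\approx(c^2-3s^2)\,x\log_2\frac{1}{x}=(2\cos\theta-1)\,x\log_2\frac{1}{x}$.

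Near product states the sign is therefore a competition between $I_z$ and all three Holevo terms, and your bound $\chi_{x,y}\le S(\rho_E)$ is tight to leading order there. The result also genuinely needs the restriction on $\theta$: the coefficient is positive on $[0,\pi/4]$ but changes sign at $\theta=\pi/3$. An argument built on $\chi_z=0$ does not carry over to the attack the theorem is actually about.
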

	\textit{Proof:} See supplementary material.\\
	For a numerical illustration of Theorem.\ref{clo1}, let us consider maximally entangled state and let $E^{'}$s cloning operation be specified by $\theta$$=$$\frac{\pi}{4}.$ Correlation tensor($T$,say) of $\rho_{AB}^{'}$ is $\textmd{\small{diag}}(1,$$-$$1,1).$ So, $\rho_{AB}^{'}$ satisfy both Eqs.(\ref{vald1},\ref{vald2}). Now $I(A:B)$$=$$0.739127$ and Holevo quantity $\chi_(A:E)$$=$$0.1405.$ So $E^{'}$s side information cannot be more than $0.1405$ bits. Devetak-Winter key rate is thus $r_D$$=$$0.5986$ bits(see supplementary material). So by Eq.(\ref{keyd}), a secret key can be distilled from $\mathcal{P}_{QKD}^{C_1}.$ \\
	\textit{Werner Class of States Always Useful In $\mathcal{P}_{QKD}^{C_1}$ :}In practical situations, noisy entangled state gets distributed from $\mathbf{S}$ to the trusted parties. For example, let $\rho_{AB}$ be a Werner state\cite{hor1}:
	\begin{eqnarray}\label{werner}
		\rho_{AB}&=&v|\psi\rangle\langle \psi|+\frac{1-v}{4}\mathbb{I}_4,\textmd{ \small{where}}\\
		|\psi\rangle&=&\frac{|01\rangle-|10\rangle}{\sqrt{2}}\nonumber
	\end{eqnarray}
	$2^{nd}$ qubit of $\rho_{AB}$ comes under $E^{'}$s cloning attack. Even in this case secret key distillation becomes possible. Precisely, we can say:\textit{
		When both qubits of a maximally entangled state get distributed from $\mathbf{S}$ across independent and identical single-qubit depolarizing channel($\mathcal{N}(\sqrt{v})$) in $\mathcal{P}_{QKD}^{C_1},$ secret key can always be distilled.}\\
		% provided $\mathcal{P}_{QKD}^{C_1}$ passes both validation checks.
Proof of the result provided in supplementary material.\\
\par For a particular numerical illustration of above result, let us consider $\mathcal{P}_{QKD}^{C_1}$ where each qubit of maximally entangled state is communicated over $\mathcal{N}(\sqrt{0.8})$ and let $E^{'}$s cloning operation be specified by $\theta$$=$$\frac{\pi}{6}.$ Here $T$ of $\rho_{AB}^{'}$ is $\textmd{\small{diag}}($$-$$0.8,$$-$$0.8).$ $\rho_{AB}^{'}$ thus satisfies both Eqs.(\ref{vald1},\ref{vald2}). Now $I(A:B)$$=$$0.478009$ and $\chi_(A:E)$$=$$0.0449767.$ So $E^{'}$s gains at most $0.0449767$ bits of information from $\mathcal{P}_{QKD}^{C_1}$ and Devetak-Winter key rate turns out to be $0.433033$ bits. So by Eq.(\ref{keyd}), a secret key can be distilled from $\mathcal{P}_{QKD}^{C_1}.$
\par Above result indicates possibility of distilling secret key from $\mathcal{P}_{QKD}^{C_1}$ when qubits of pure entangled state are communicated over noisy channels. However, apart from noisy communication, there may be several other factors resulting in distribution of noisy entangled states from $\mathbf{S}.$ So, in general it becomes relevant to consider $\rho_{AB}$ to be an arbitrary two-qubit state\cite{gam}:
	\begin{eqnarray}\label{st41}
		\small{\rho}_{AB}=\small{\frac{1}{4}(\mathbb{I}_{2}\times\mathbb{I}_2+\vec{\mathbf{m}}.\vec{\sigma}\otimes \mathbb{I}_2+\mathbb{I}_2\otimes \vec{\mathbf{n}}.\vec{\sigma}+\sum_{i=1}^{3}t_{i}\sigma_{i}^{\otimes 2}}.
	\end{eqnarray}
	Impossibility of deriving any closed form for $r_D$ makes it impossible to provide a complete characterization of arbitrary two-qubit state $\rho_{AB}$(Eq.(\ref{st41})) in context of its utility to design a successful $\mathcal{P}_{QKD}^{C_1}$. However, if we confine to a sub-class of $\rho_{AB}$ having no local Bloch vectors, i.e., $\vec{\mathbf{m}}$$=$$\vec{\mathbf{n}}$$=$$\vec{\Theta},$ we get an interesting result.
	\begin{theorem}\label{clo3}
		When $\mathbf{S}$ distributes arbitrary two-qubit entangled state having no local Bloch vector, secret key can always be distilled from $\mathcal{P}_{QKD}^{C_1}.$
		% provided it passes both validation checks.
	\end{theorem}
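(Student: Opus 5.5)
The plan is to reduce to a Bell-diagonal normal form and compute every entropic quantity in Eq.(\ref{wr7}) explicitly. Since $\vec{\mathbf{m}}=\vec{\mathbf{n}}=\vec{\Theta}$, local unitaries $U_A\otimes U_B$ diagonalize the correlation tensor of $\rho_{AB}$ in Eq.(\ref{st41}). So I would first take $\rho_{AB}=\frac{1}{4}(\mathbb{I}_4+\sum_i t_i\sigma_i^{\otimes 2})$, a convex mixture of the four Bell states with weights $p_k$. Entanglement means $\max_k p_k>\frac{1}{2}$, equivalently $|t_1|+|t_2|+|t_3|>1$ with the sign constraints of the tetrahedron.

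One point must be checked and not assumed: the cloner in Eq.(\ref{cln1}) is written in the computational basis, so it is not obviously covariant under the diagonalizing unitaries. I would keep $U_B$ attached to the cloner, or show that the MUB-averaged quantity in Eq.(\ref{wr7}) is invariant under it. The invariance argument should work because the three MUBs are permuted by the relevant local Clifford-type rotations. If it fails, I would restrict to the pure-state-consistent orientation used in the supplementary material for Theorem \ref{clo1}.

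Next I would apply the cloning unitary on the second qubit, using linearity on the Bell-state decomposition, to obtain $\rho_{ABE}$. From it I would compute $\rho_{AB}'=\textmd{Tr}_E[\rho_{ABE}]$, $\rho_E$, and the conditional states $\rho_E^{(a_j)}$ for $A$'s measurements in $\sigma_1,\sigma_2,\sigma_3$.

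The absence of local Bloch vectors is used at this stage. $A$'s marginal remains $\mathbb{I}/2$, so every $p(a_j)=\frac{1}{2}$. Also, $B$'s marginal in $\rho_{AB}'$ stays unbiased for each MUB. Hence $I_j(A:B)=1-h\!\left(\frac{1+|t_j'|}{2}\right)$, where $h$ is the binary entropy and $t_j'(\theta,\vec t)$ is the post-cloning correlation. Similarly, $\chi_j=S(\rho_E)-\frac{1}{2}\sum_{a_j}S(\rho_E^{(a_j)})$. Each $\rho_E^{(a_j)}$ is a $2\times 2$ matrix whose Bloch vector is linear in $t_j$ with a coefficient of order $\sin\theta\cos\theta$. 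This gives the bound $\chi_j\leq 1-h\!\left(\frac{1+c(\theta)|t_j|}{2}\right)$ with an explicit $c(\theta)$.

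Finally I would show $r_D=\frac{1}{3}\sum_j\left[h\!\left(\frac{1+c(\theta)|t_j|}{2}\right)-h\!\left(\frac{1+|t_j'|}{2}\right)\right]>0$. The intended approach is termwise: show $|t_j'|>c(\theta)|t_j|$ whenever $t_j\neq 0$, and use that $h(\frac{1+x}{2})$ is strictly decreasing in $x\geq 0$. Positivity of the sum then follows as long as some $t_j\neq 0$, which entanglement guarantees. This reduces everything to a one-parameter trigonometric inequality in $\theta\in[0,\frac{\pi}{4}]$. I expect it is tight at $\theta=\frac{\pi}{4}$, which is consistent with the maximally entangled example after Theorem \ref{clo1}.

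The main obstacle is this comparison $|t_j'|$ versus $c(\theta)|t_j|$, together with the Holevo bound on $\chi_j$ feeding into it. If a clean termwise inequality fails for some $j$ (for example, if the cloner rotates $\sigma_2$ into $\sigma_3$), I would fall back to a different argument. The Bell-diagonal family is a tetrahedron, and I would check the sign of $r_D$ along its edges and near the separable octahedron boundary. This uses the pure-state case of Theorem \ref{clo1} at the vertices and the Werner-type result stated above along the symmetric axis. I would then argue monotonicity of $r_D$ in the dominant weight $\max_k p_k$.
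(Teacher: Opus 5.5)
Your termwise strategy has the right shape. The problem is that the two computational claims it rests on are false for this cloner.

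The paper's explicit states (Eq.~(\ref{ill1}), Eq.~(\ref{theoo12})) show what Eq.~(\ref{cln1}) actually does. On $B$ it is the Stinespring isometry of an amplitude-damping channel, $|0\rangle_B|0\rangle_E\mapsto|00\rangle$ and $|1\rangle_B|0\rangle_E\mapsto c|10\rangle+s|01\rangle$, with $c=\cos\frac{\theta}{2}$ and $s=\sin\frac{\theta}{2}$. Eve keeps the complementary output. Hence $\vec{\mathbf{n}}^{'}=(0,0,s^2)$ and $T^{'}=\mathrm{diag}(ct_1,ct_2,c^2t_3)$.

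\emph{The $Z$-marginal claim.} $B$'s $Z$-marginal is biased. Your formula for $I_j$ is therefore right for $j=1,2$ but not for $j=3$. There one gets $I_3=1-\frac{1+s^2}{2}H\bigl(\frac12+\frac{c^2t_3}{2(1+s^2)}\bigr)-\frac{c^2}{2}H\bigl(\frac{1+t_3}{2}\bigr)$.

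\emph{The Holevo bound.} This is the more serious error. Eve's conditional Bloch vectors are $(\pm st_1,0,c^2)$, $(0,\pm st_2,c^2)$ and $(0,0,c^2\pm s^2t_3)$. They are affine in $t_j$ with a large fixed offset, and $S(\rho_E)=H(\frac{1+c^2}{2})<1$. So $\chi_1=H(\frac{1+c^2}{2})-H\bigl(\frac{1+\sqrt{c^4+s^2t_1^2}}{2}\bigr)$, which is not of the form $1-H(\frac{1+c(\theta)|t_1|}{2})$. Your bound does not follow from ``Bloch vector linear in $t_j$''. With the coefficient that actually appears it is false: at $\theta=\frac{\pi}{4}$, $t_1=1$, one has $\chi_1\approx0.172$, while $1-H(\frac{1+\sin(\pi/8)}{2})\approx0.108$. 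The offset pushes Eve's states toward purity and raises her Holevo information. A small-$t$ expansion also shows that, as $\theta\to0$, no $c(\theta)=O(\theta)$ can work. So ``$|t_j'|>c(\theta)|t_j|$'' is not the inequality you need.

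\emph{The reduction.} The invariance you hope for is false. The amplitude-damping isometry is covariant only under $z$-rotations, while the three MUBs are fixed. The MUB-averaged $r_D$ therefore changes under the diagonalizing $U_A\otimes U_B$, Clifford or not. In fact the statement fails outside the diagonal frame. Take a maximally entangled state with $T=-P$, where $P$ is a cyclic permutation matrix. It has no local Bloch vectors, but $T_{jj}=0$. After cloning, $A$'s and $B$'s outcomes are independent in every MUB, so $I_j=0$, while $\chi_3>0$ for $\theta>0$. Hence $r_D<0$. You must take the theorem in the diagonal form of Eq.~(\ref{st41}), which is all the paper covers. Drop both the reduction and the tetrahedron/monotonicity fallback.

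\emph{How the argument then closes.} With the correct formulas your termwise idea goes through analytically. That is stronger than the paper's route, which (as for Theorem~\ref{clo1} and the Werner family) is a closed-form $r_D$ plus a numerical plot; the supplement in fact gives no separate argument for this theorem.
For $j=1,2$, $I_j-\chi_j$ vanishes at $t_j=0$. Its derivative in $t=|t_j|$ is $\frac{1}{\ln2}\bigl[c\tanh^{-1}(ct)-\frac{s^2t}{r}\tanh^{-1}(r)\bigr]$, with $r=\sqrt{c^4+s^2t^2}$. This is positive for $\theta\in(0,\frac{\pi}{4}]$ because $\tanh^{-1}(ct)\geq ct$ and $\tanh^{-1}(r)/r\leq\tanh^{-1}(r_*)/r_*<c^2/s^2$, where $r_*=\sqrt{1-c^2s^2}$.
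For $j=3$ everything is diagonal. Eve's bit is $B$'s bit passed through a further $Z$-channel, which is possible because $s^2\leq c^2$. So $\chi_3\leq I_3$ by data processing.
Entanglement forces $(t_1,t_2)\neq(0,0)$, and hence $r_D>0$.
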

	\textit{Proof:} See supplementary material.\\
	Theorem.\ref{clo3} thus provides partial characterization of arbitrary $\rho_{AB}$ in terms of its correlation tensor. The class of two-qubit states with no local Bloch vectors is more commonly known as \textit{states with mixed marginals}\cite{hor1}. This class encompasses several important families of noisy entangled states resulting from either erroneous generation of entangled states or distribution of pure entangled states over noisy channels. We get the following results as a direct application of Theorem.\ref{clo3}.
	\begin{corollary}
	When mixed entangled state, generated from application of faulty Hamiltonian and $\mathcal{C-NOT}$ gates, gets transmitted from $\mathbf{S},$ secret key can always be distilled from $\mathcal{P}_{QKD}^{C_1}.$ 	
	\end{corollary}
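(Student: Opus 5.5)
The plan is to reduce the corollary to Theorem~\ref{clo3}. It then suffices to show two things about the state emitted by $\mathbf{S}$: it has no local Bloch vectors, so $\vec{\mathbf{m}}=\vec{\mathbf{n}}=\vec{\Theta}$ in Eq.(\ref{st41}), and it is entangled.

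\textbf{Step 1: model the faulty circuit.} The ideal circuit starts from $|00\rangle$, applies a Hadamard on the first qubit and then a $\mathcal{C-NOT}$, and outputs $\frac{1}{\sqrt{2}}(|00\rangle+|11\rangle)$. I would follow the standard noisy-gate model in the literature. In that model the Hadamard is generated by a Hamiltonian whose parameters fluctuate, so it acts as a random rotation. The $\mathcal{C-NOT}$ is likewise applied with random error. Averaging over the fluctuations gives the output state
\begin{equation*}
\rho_{AB}=p\,|\phi^{+}\rangle\langle\phi^{+}|+(1-p)\,|\phi^{-}\rangle\langle\phi^{-}|\quad\text{or more generally}\quad \sum_{i}q_i|\beta_i\rangle\langle\beta_i|,
\end{equation*}
which is a mixture of Bell states (or one lying in the Bell-diagonal family) with weights fixed by the noise parameters.

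\textbf{Step 2: check the hypotheses of Theorem~\ref{clo3}.} Every Bell state has maximally mixed marginals, so any convex mixture of Bell states also has $\vec{\mathbf{m}}=\vec{\mathbf{n}}=\vec{\Theta}$. The correlation tensor is then diagonal, $T=\mathrm{diag}(t_1,t_2,t_3)$, with each $t_i$ an explicit linear combination of the weights $q_i$. For entanglement I would use the Bell-diagonal criterion: the state is entangled iff $\max_i q_i>\tfrac12$, or equivalently iff $|t_1|+|t_2|+|t_3|>1$. Here the point is that the corollary concerns the noise regime in which the emitted state is still entangled.

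\textbf{Main obstacle.} The hard part is pinning down the precise faulty-gate model and confirming that the output really has zero local Bloch vectors, not just approximately. A faulty Hadamard modelled as an imperfect rotation could leave a residual $\sigma_z$ component in the marginals. I would first compute the reduced states directly from the averaged density matrix and verify that the off-diagonal and population imbalances cancel after averaging over the symmetric noise distribution. If they do not cancel for a general model, I would restrict to the symmetric noise model, under which $\rho_{AB}$ is exactly Bell-diagonal. Once that is settled, Theorem~\ref{clo3} gives $r_D>0$ in Eq.(\ref{keyd}), so a secret key can be distilled from $\mathcal{P}_{QKD}^{C_1}$.
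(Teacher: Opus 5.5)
Your proposal is correct and is essentially the paper's argument: the corollary is obtained as a direct application of Theorem~\ref{clo3}, once one checks that the faulty-gate output state has no local Bloch vectors (its entanglement is part of the hypothesis). The modelling issue you flag as the main obstacle is settled in the paper by writing down the explicit state from the cited noisy-gate model, which is the Werner-type state $p_1p_2|\phi^-\rangle\langle\phi^-|+(1-p_1p_2)\mathbb{I}_4/4$; its marginals are therefore exactly $\mathbb{I}_2/2$ by inspection, and it is entangled precisely when $p_1p_2>\tfrac{1}{3}$.
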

	Such a two-qubit entangled state has the form:
	\begin{eqnarray}
		\rho_{AB}&=&\frac{1+p_1p_2}{4}\sum_{i=0}^1|ii\rangle\langle ii|+\frac{1-p_1p_2}{4}\sum_{i,j=0}^1|ij\rangle\langle ij|\nonumber
		\\
		&-&\frac{p_1p_2}{2}\sum_{i,j=0}^1|ii\rangle\langle jj|.
	\end{eqnarray}
	Here $p_1$ and $p_2$ denote the noise parameters characterizing the faulty Hamiltonian gate and $\mathcal{C-NOT}$ gate(with $1^{st}$ qubit as control qubit) used to generate entanglement from $|01\rangle\langle 01|$\cite{per1,per}. One can easily check that this class of states has no local Bloch vector. 
	\begin{corollary}
When both qubits of a maximally entangled state get distributed from $\mathbf{S}$ across independent and identical phase damping channel($\mathcal{N}(q_1)$) or bit-flip channel($\mathcal{N}(q_2)$) in $\mathcal{P}_{QKD}^{C_1},$ secret key can always be distilled.
\end{corollary}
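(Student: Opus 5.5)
The plan is to reduce the corollary to Theorem~\ref{clo3}. For that we only need to check two things for the state produced when each qubit of a maximally entangled state passes through identical local phase-damping or bit-flip channels: the output two-qubit state is entangled, and it has vanishing local Bloch vectors, $\vec{\mathbf{m}}=\vec{\mathbf{n}}=\vec{\Theta}$.

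\textbf{Output state.} Take $|\phi\rangle=\frac{|00\rangle+|11\rangle}{\sqrt{2}}$, whose correlation tensor is $\textmd{diag}(1,-1,1)$. The Kraus operators of the two channels are
\begin{equation*}
\mathcal{N}(q_1):\ \Big\{\sqrt{1-q_1}\,\mathbb{I}_2,\ \sqrt{q_1}\,\sigma_3\Big\},\qquad
\mathcal{N}(q_2):\ \Big\{\sqrt{1-q_2}\,\mathbb{I}_2,\ \sqrt{q_2}\,\sigma_1\Big\}.
\end{equation*}
Both channels are unital Pauli channels. A local Pauli channel acts diagonally in the Pauli basis, scaling each $\sigma_i$ by a factor $\eta_i$:
\begin{itemize}
\item for phase damping, $\eta=(1-2q_1,\,1-2q_1,\,1)$;
\item for bit flip, $\eta=(1,\,1-2q_2,\,1-2q_2)$.
\end{itemize}
Apply $\mathcal{N}\otimes\mathcal{N}$ to the Pauli expansion of Eq.~(\ref{st41}) with $\vec{\mathbf{m}}=\vec{\mathbf{n}}=0$. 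The output correlation tensor is $\textmd{diag}(\eta_1^2 t_1,\eta_2^2 t_2,\eta_3^2 t_3)$.

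\textbf{Local Bloch vectors.} The local Bloch vectors of the output are still zero, because unital channels map $\mathbb{I}_2\otimes\mathbb{I}_2$ to itself. This is the key structural point, and it is immediate.

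\textbf{Entanglement.} For phase damping the tensor is $\textmd{diag}((1-2q_1)^2,-(1-2q_1)^2,1)$. For a Bell-diagonal state, entanglement is equivalent to $|t_1|+|t_2|+|t_3|>1$. Here this reads $1+2(1-2q_1)^2>1$, which holds for every $q_1\neq\frac12$. The bit-flip case is identical up to a permutation of axes. At $q_1=\frac12$ (resp.\ $q_2=\frac12$) the state is separable, so that parameter value is excluded. This is consistent with the corollary's implicit assumption that the distributed state is entangled.

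\textbf{Conclusion.} With both hypotheses verified, Theorem~\ref{clo3} applies and yields $r_D>0$ in $\mathcal{P}_{QKD}^{C_1}$. By Eq.~(\ref{keyd}), a secret key can be distilled.

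The main obstacle is clarifying the scope of the statement. If the channel convention differs, for example amplitude-damping-like parametrizations, local Bloch vectors could appear, so the plan must fix the Pauli-channel Kraus forms above. The degenerate point $q=\frac12$ must also be excluded, since the output there is separable.
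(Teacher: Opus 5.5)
Your proposal is correct and takes the same route as the paper, which presents this corollary as a direct application of Theorem~\ref{clo3}. Your two checks are exactly what that application requires and what the paper leaves implicit: the local Bloch vectors vanish because both channels are unital, and the output stays entangled away from the separable point $q=\frac{1}{2}$. Your caveat about channel conventions does no harm but is unnecessary for phase damping, since its standard Kraus form $\{\textmd{diag}(1,\sqrt{1-\lambda}),\,\textmd{diag}(0,\sqrt{\lambda})\}$ is itself unital and coincides with your phase-flip form when $1-2q_1=\sqrt{1-\lambda}$.
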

Above observations point out that even if entanglement gets degraded either due to faults in entangling gates or due to transmission along noisy channels, resulting noisy entangled states can still be used to design successful $\mathcal{P}_{QKD}^{C_1}.$
%(details expressions of the states provided in supplementary material).
	%This class of states encompasses the noisy entangled generated from $\mathbf{S},$ resulting from imperfection in Hadamard and $C$$-$$NOT$ gates applied over $|01\rangle\langle 01|.$(ref persstncy \cite{indr5}):
	\par If we consider $\rho_{AB}$ to have local Bloch vector, $\mathcal{P}_{QKD}^{C_1}$ may not always be used for key distillation. For example let us consider that $\mathbf{S}$ distribute a sub-class of the $X$ states:
	\begin{eqnarray}\label{x}
		\rho_{AB}&=&	\left[ {\begin{array}{cccc}
				x_1&0&0&0\\
				0&x_2&y&0\\
				0&y&x_{3}&0\\
				0&0&0&0\\		
		\end{array} } \right]\\
		\textmd{\small{with}}&&x_1+x_2+x_3=1;\,y^2\leq x_2\cdot x_3.\nonumber
	\end{eqnarray}
	For this state(Eq.(\ref{x})), the Bloch parameters(Eq.(\ref{st41})) are:
	\begin{eqnarray}
		T&=&\textmd{diag}(2y,2y,x_1-x_2+x_3),\nonumber\\
		\vec{\mathbf{m}}&=&(x_1+x_2-x_3,0,0);\,\,\vec{\mathbf{n}}=(0,0,x_1-x_2-x_3)
	\end{eqnarray}    
	Let $2^{nd}$ qubit of $\rho_{AB}$(Eq.(\ref{x})) be subjected to $E^{'}$s cloning attack(Eq.(\ref{cln1})) specified by $\alpha$$=$$\frac{\pi}{4}.$ In this case secret key cannot always be distilled from $\mathcal{P}_{QKD}^{C_1}$(for some members from this subclass) even if the protocol passes both the validation checks(see Fig.\ref{fig1}). 
	%For a particular numerical consider 
	\begin{center}
		\begin{figure}
			\includegraphics[width=2.9in]{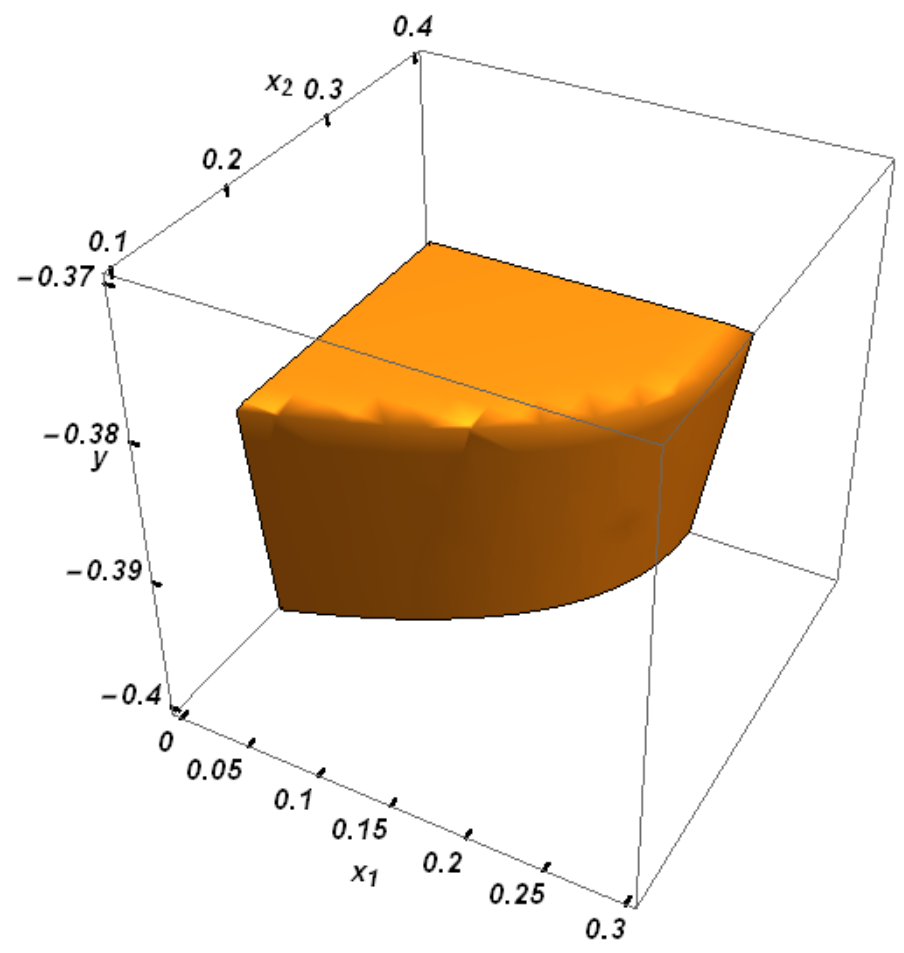} 
			\caption{\emph{Shaded region gives a subspace in the parameter space $(x_1,x_2,y)$ of $\rho_{AB}$ specified by Eq.(\ref{x}). When $\rho_{AB}$ corresponding to any point in the region is used in $\mathcal{P}_{QKD}^{C_1}$ with $\theta$$=$$\frac{\pi}{4}$, the protocol passes both the validation checks but $r_{D}$ turns out to be negative. So $\mathcal{P}_{QKD}^{C_1}$ cannot be used to distill any secret key though the trusted parties do not abort the protocol.}}
			\label{fig1}
		\end{figure}
	\end{center}
For further illustration, let us consider the following family of two-qubit states:
	\begin{eqnarray}\label{st411}
	\small{\rho}_{AB}&=&\frac{1}{4}(\mathbb{I}_{2}\times\mathbb{I}_2+(0,0,\sqrt{\small{s_2^2-s_1^2}}).\vec{\sigma}\otimes \mathbb{I}_2\nonumber\\
	&+&\mathbb{I}_2\otimes (0,0,\sqrt{\small{s_2^2-s_1^2}}).\vec{\sigma}+s_{1}\sigma_{1}^{\otimes 2}\nonumber\\
		&-&s_{1}\sigma_{2}^{\otimes 2}+s_{2}\sigma_{3}^{\otimes 2})\,\, 0\leq s_1,s_2\leq 1.
\end{eqnarray}
Several members from this family(Eq.(\ref{st411})) cannot be used to design successful $\mathcal{P}_{QKD}^{C_1}$(see Fig.\ref{extra}).
	\begin{center}
	\begin{figure}
		\includegraphics[width=2.8in]{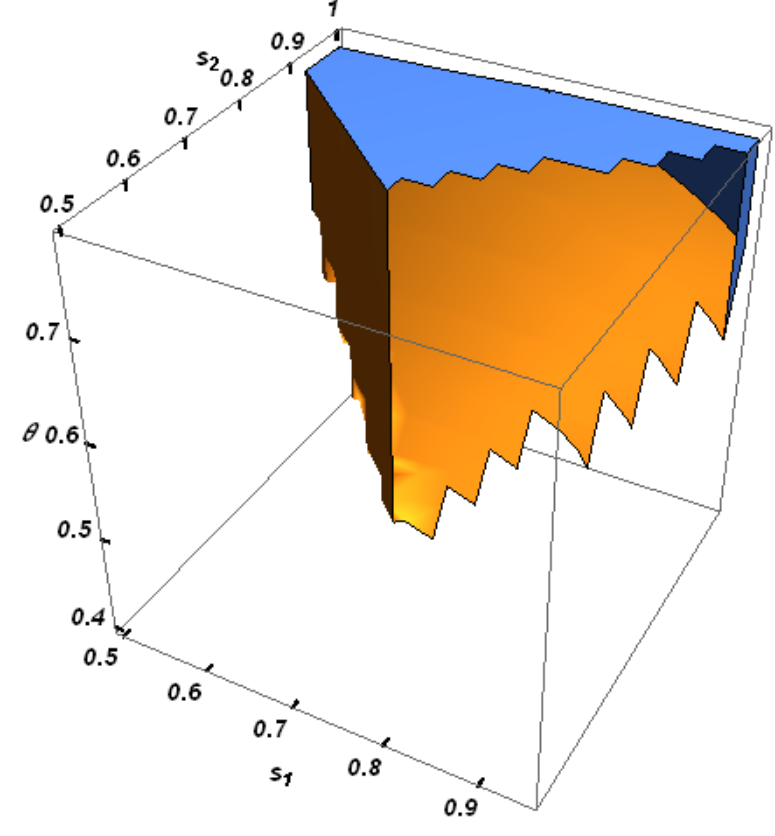} 
		\caption{\emph{Shaded region indicates a subspace in the  parameter space $(s_1,s_2,\theta)$ for which two-qubit states(corresponding to $(s_1,s_2)$ in this region) from the family specified by Eq.(\ref{st411}) cannot be used to design successful $\mathcal{P}_{QKD}^{C_1}$ specified by cloning parameter $\theta$ corresponding to point $(s_1,s_2,\theta)$ lying in this region although the protocol passes both the validation checks.}}
		\label{extra}
	\end{figure}
\end{center}
Above demonstration may give rise to the intuition that perhaps entangled state with local Bloch vectors may be useless in designing successful $\mathcal{P}_{QKD}^{C_1}.$ Our next result turns out to be counter-intuitive in this regard.
	\begin{theorem}
When both qubits of a maximally entangled state get distributed from $\mathbf{S}$ across independent and identical amplitude damping channel($\mathcal{N}(q_3)$) in $\mathcal{P}_{QKD}^{C_1},$ secret key can always be distilled.		
	\end{theorem}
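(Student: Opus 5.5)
We argue by direct computation of the Devetak--Winter rate $r_D$ of Eq.(\ref{wr7}) for the explicit family of states, in the same way as for Theorem~\ref{clo1} and the Werner-state result.

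\textbf{The state reaching the parties.} Sending each qubit of $\frac{1}{\sqrt{2}}(|00\rangle+|11\rangle)$ through an amplitude damping channel with Kraus operators $K_0=|0\rangle\langle 0|+\sqrt{1-q_3}|1\rangle\langle 1|$ and $K_1=\sqrt{q_3}|0\rangle\langle 1|$ gives the $X$ state
\begin{equation*}
\rho_{AB}=\frac{1}{2}\Big[(1+q_3^2)|00\rangle\langle 00|+q_3(1-q_3)\big(|01\rangle\langle 01|+|10\rangle\langle 10|\big)+(1-q_3)^2|11\rangle\langle 11|+(1-q_3)\big(|00\rangle\langle 11|+|11\rangle\langle 00|\big)\Big].
\end{equation*}
Its Bloch data are $\vec{\mathbf{m}}=\vec{\mathbf{n}}=(0,0,q_3)$ and $T=\textmd{diag}(1-q_3,-(1-q_3),(1-q_3)^2+q_3^2)$. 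The local Bloch vectors are nonzero, so Theorem~\ref{clo3} does not apply and we must compute $r_D$ directly.

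\textbf{Applying the cloner and computing the entropic terms.}
1. Apply $U_C$ of Eq.(\ref{cln1}) to the second qubit of $\rho_{AB}$ to obtain $\rho_{ABE}$. Since $U_C$ has a simple action on $|i0\rangle$, $\rho_{ABE}$ stays sparse and its $X$-like structure is essentially preserved.
2. Write down $\rho'_{AB}=\textmd{Tr}_E[\rho_{ABE}]$ and $\rho_E=\textmd{Tr}_{AB}[\rho_{ABE}]$ explicitly.
3. For each of the three MUBs ($\sigma_1,\sigma_2,\sigma_3$ eigenbases), compute:
 the joint outcome distribution of $A$ and $B$ on $\rho'_{AB}$, which gives $I_j(A:B)$ in terms of binary entropies;
 the conditional states $\rho_E^{(a_j)}$ and probabilities $p(a_j)$, which give $\chi_j(A:E)$ from Eq.(\ref{wr3}).
4. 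Use the structure to reduce the work. $\rho_E$ is a single-qubit state whose eigenvalues depend only on $q_3$ and $\sin^2\frac{\theta}{2}$. For the $\sigma_1$ and $\sigma_2$ bases the two outcomes of $A$ are symmetric, so $p(a_j)=\frac12$ and the two conditional states have equal entropy. This leaves a small number of binary-entropy expressions in $(q_3,\theta)$.

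\textbf{Showing $r_D>0$ on the whole domain.} The claim is that $r_D(q_3,\theta)>0$ for all $\theta\in[0,\frac{\pi}{4}]$ and all $q_3$ for which $\rho_{AB}$ is entangled; the state is entangled exactly when $q_3<1$.

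1. At $\theta=0$ the clone is trivial, so $\chi_j(A:E)=0$ and $r_D=\frac13\sum_j I_j(A:B)>0$ whenever correlations survive.
2. To control Eve's gain, use $\chi_j(A:E)\le S(\rho_E)$. The cloning amplitude is at most $\sin\frac{\pi}{8}$, which gives a uniform bound on $S(\rho_E)$ (roughly the $0.14$ bits seen in the maximally entangled example).
3. For the trusted parties, bound $I_j(A:B)$ below using the $\sigma_3$ basis. Its outcomes remain strongly correlated under amplitude damping, since $|00\rangle$ carries weight $\frac{1+q_3^2}{2}$.
4. It then suffices to show $\frac13\sum_j I_j(A:B)-S(\rho_E)>0$. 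A crude bound of this kind could fail; in that case, fall back on monotonicity in $\theta$, so that the worst case is $\theta=\frac{\pi}{4}$, and reduce to a one-variable inequality in $q_3$.

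\textbf{Main obstacle.} The hard part is the regime $q_3\to1$. There the correlation tensor entries $1-q_3$ vanish, and the $\sigma_1$ and $\sigma_2$ mutual informations go to zero quadratically. We must check that Eve's Holevo quantity vanishes at least as fast. The approach is to expand both $I_j(A:B)$ and $\chi_j(A:E)$ to leading order in $\epsilon=1-q_3$, compare the coefficients, and then handle the intermediate range analytically where possible, supplemented by a rigorous numerical check of the resulting one-parameter inequality.
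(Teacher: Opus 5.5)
Your route is the paper's route. The paper proves this result only by pointing back to Theorem~\ref{clo1}: write $r_D$ in closed form for the damped state after the cloner, then plot it to show positivity. Your damped state and its Bloch data are correct. The gap is that the inequality you set out to prove, $r_D>0$ for all $q_3<1$ and $\theta\in[0,\frac{\pi}{4}]$, fails in a thin strip near $q_3=1$ for every $\theta>0$. The check you flag as the main obstacle goes the wrong way.

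Put $\epsilon=1-q_3$, $s=\sin\frac{\theta}{2}$, $c=\cos\frac{\theta}{2}$, and read the cloner as the paper uses it in its worked example ($|00\rangle\mapsto|00\rangle$, $|10\rangle\mapsto c|10\rangle+s|01\rangle$). Then $\rho_E=\textmd{diag}(1-x,x)$ with $x=\frac{\epsilon s^2}{2}$. For Alice's two $\sigma_1$ outcomes, each with probability $\frac12$,
\[
\rho_E^{(\pm)}=\begin{pmatrix}1-x & \pm\frac{\epsilon s}{2}\\ \pm\frac{\epsilon s}{2} & x\end{pmatrix}.
\]
At $q_3=0$, $\theta=\frac{\pi}{4}$ this reproduces the matrices of the paper's worked example.

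The off-diagonal entry lowers the small eigenvalue only to about $x-\frac{\epsilon^2s^2}{4}$. But $H'(x)=\log_2\frac{1-x}{x}$ diverges as $x\to0$, so $\chi_1=\chi_2\approx\frac{\epsilon^2s^2}{4}\log_2\frac{2}{\epsilon s^2}$. Against this, $m'_1=n'_1=0$ and $t'_1=c\epsilon$ give only $I_1=I_2=1-H(\frac{1+c\epsilon}{2})\approx\frac{c^2\epsilon^2}{2\ln2}$. In the $\sigma_3$ basis both terms are $O(\epsilon^2)$ with no logarithm, and $I_3-\chi_3\approx(\frac12-\frac{1}{4\ln2})\epsilon^2\cos\theta$. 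Altogether
\[
3r_D\approx\frac{\epsilon^2}{\ln2}\Big[c^2+\Big(\frac{\ln2}{2}-\frac14\Big)\cos\theta-\frac{s^2}{2}\ln\frac{2}{\epsilon s^2}\Big],
\]
which is negative once $\epsilon$ is small enough. At $\theta=\frac{\pi}{4}$ the sign changes near $1-q_3\approx5\times10^{-5}$, and at $1-q_3=10^{-5}$ one gets $r_D\approx-5\times10^{-12}$ bits.

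So Eve's Holevo quantity vanishes like $\epsilon^2\log\frac{1}{\epsilon}$, strictly slower than $I(A:B)$. The leading-order comparison you propose would refute the claim rather than prove it. The paper's plot-based argument cannot resolve a strip where $|r_D|\sim10^{-11}$ either.

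Your fallback steps would not help:
\begin{itemize}
\item In step 3, a weight close to $1$ on $|00\rangle$ makes the $\sigma_3$ outcomes nearly deterministic, not strongly correlated; indeed $I_3=O(\epsilon^2)$.
\item In step 2, the bound $\chi_j\le S(\rho_E)=H(x)\sim\epsilon\log\frac{1}{\epsilon}$ can never be beaten by $I(A:B)=O(\epsilon^2)$.
\item Monotonicity in $\theta$ does not touch the $q_3\to1$ problem.
\end{itemize}

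What can be proved is positivity on a region bounded away from $q_3=1$. The natural region is where $\mathcal{P}_{QKD}^{C_1}$ is not aborted. The singular values of $T'$ are $c\epsilon$, $c\epsilon$ and $|t'_3|\le1$, so $\lambda_1+\lambda_2\le1+c\epsilon$, and Eq.~(\ref{vald2}) then forces $1-q_3>\sqrt{2}-1$. The statement needs such a qualifier, and your analytic or certified numerical check should target that compact region.
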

\textit{Proof:} This can be proved following similar approach as that used for proving Theorem.\ref{clo1}.\\
%Form of the state along with its Bloch parameter details are discussed in supplementary material.
\textit{Both Qubits Cloning Attack:} Let us now consider the situation where $E$ intercepts both the qubits of $\rho_{AB}$(transmitted from $\mathbf{S}$) and applies the cloner specified by Eq.(\ref{cln1}) on both the individual qubits. Let $\rho_{AE_1BE_2}$ denote the resulting state.
	$E$ keeps the cloned qubits, i.e., $2^{nd},$ $4^{th}$ qubits of $\rho_{AE_1BE_2}$ and sends the original qubits, i.e., $1^{st},$ $3^{rd}$ qubits of $\rho_{AE_1BE_2}$ to $ A,B$ respectively. The trusted parties execute $\mathcal{P}_{QKD}$ using $\rho_{AB}^{''}$$=$$\textmd{Tr}_{E_1E_2}[\rho_{AE_1BE_2}]$ whereas $E$ retains a two-qubit state $\rho_{E_1E_2}$$=$$\textmd{Tr}_{AB}[\rho_{AE_1BE_2}]$. Let $\mathcal{P}_{QKD},$ coming under $E^{'}$s double-qubit cloning attack be denoted by $\mathcal{P}_{QKD}^{C_2}.$ 
	\par Compared to $\mathcal{P}_{QKD}^{C_1},$ in $\mathcal{P}_{QKD}^{C_2}$ $E$ can now establish stronger correlations between $\rho_{E_1E_2}$ and the trusted parties' measurement outcomes obtained from $\rho_{AB}^{''}.$ Consequently, $E$ will gain more side information in $\mathcal{P}_{QKD}^{C_2}.$ This in turn reduces the chance of distilling secret key. 
	\par In this context, let us recall Theorem.\ref{clo1}. Any pure state can be used to design a successful $\mathcal{P}_{QKD}^{C_1}.$ But we observe violation of this theorem in $\mathcal{P}_{QKD}^{C_2}$. Precisely, there exist pure two-qubit entangled states, which when used in $\mathcal{P}_{QKD}^{C_2},$ secret key cannot be distilled(see Table.\ref{table:ta1}). Similar violation of theorem.\ref{clo3} is also observed(see Table.\ref{table:ta1}). However, $E$ does not gain complete knowledge about the key via its double cloning strategy. Precisely, there exist entangled states which can be used to design a successful $\mathcal{P}_{QKD}^{C_2}$(see Table.\ref{table:ta1}). 
	%stablishing correlations between her retained system and Alice’s and Bob’s outcomes,

	%Using $\rho_E,$ $E$ gains side information from $\mathcal{P}_{QKD}.$ Depending on $E^{'}$s information gain and the mutual information shared between $A,B$, secret key can be distilled from $\mathcal{P}_{QKD}$(involving $\rho_{AB}^{'}$) if Eq.(\ref{keyd}) is satisfied. This in turn renders the original state $\rho_{AB},$ distributed from $\mathbf{S},$ to be useful.
	\begin{table}
		\caption{Few numerical instances provided for comparing $\mathcal{P}_{QKD}^{C_1} $ and $\mathcal{P}_{QKD}^{C_2}.$ Clearly, $E$ obtains more side information from $\mathcal{P}_{QKD}^{C_2}.$ Moreover, some $\rho_{AB}$ can  be used to design a successful $\mathcal{P}_{QKD}^{C_1}$ only.}
		\[
		\begin{array}{|c|c|c|c|c|}
			\hline
			\rho_{AB}&\small{Protocol}&\chi&r_{D}&\textmd{\small{Remarks}}\\ \hline
			Eq.(\ref{pure})  & \mathcal{P}_{QKD}^{C_1}  &0.2258 & 0.3356&\textmd{\small{Both successful; }} \\ 
			\textmd{ \small{with}}& \textmd{\small{with} }\theta=\frac{\pi}{3}&\small{bits}&\small{bits}&\textmd{ \small{More informa-}}\\\cline{2-4}
			\alpha=\frac{\pi}{5}	&\mathcal{P}_{QKD}^{C_2}\textmd{ \small{with}} &0.3103 & 0.0604&\textmd{\small{tion leakage}}  \\
			& \theta_1=\theta_2=\frac{\pi}{3}&\small{bits}&\small{bits}&\textmd{\small{from} }\mathcal{P}_{QKD}^{C_2}\\\cline{2-4}
			\hline
			Eq.(\ref{pure})  & \mathcal{P}_{QKD}^{C_1}  &0.0903 & 0.4277&\textmd{\small{Only }}\mathcal{P}_{QKD}^{C_1} \\ 
			\textmd{ \small{with}}& \textmd{\small{with} }\theta=\frac{\pi}{4}&\small{bits}&\small{bits}&\textmd{ \small{successful}}\\\cline{2-4}
			\alpha=\frac{\pi}{3}	&\mathcal{P}_{QKD}^{C_2}\textmd{ \small{with}}  &0.4175 & -0.0365& \\
			&\theta_1=\theta_2=\frac{\pi}{4}&\small{bits}&\small{bits}&\\\cline{2-4}
			\hline
			Eq.(\ref{werner})  & \mathcal{P}_{QKD}^{C_1}  &0.0392 &0.3729 &\textmd{\small{Both successful; }} \\ 
			\textmd{ \small{with}}& \textmd{\small{with} }\theta=\frac{\pi}{3}&\small{bits}&\small{bits}&\textmd{ \small{More informa-}}\\\cline{2-4}
			v=0.75	&\mathcal{P}_{QKD}^{C_2}\textmd{ \small{with}} &0.1708 & 0.2029&\textmd{\small{tion leakage}}  \\
			&\theta_1=\theta_2=\frac{\pi}{6}&\small{bits}&\small{bits}&\textmd{\small{from} }\mathcal{P}_{QKD}^{C_2}\\\cline{2-4}
			\hline
			Eq.(\ref{werner})  & \mathcal{P}_{QKD}^{C_1}  &0.2574 & 0.0655&\textmd{\small{Only }}\mathcal{P}_{QKD}^{C_1} \\ 
			\textmd{ \small{with}}& \textmd{\small{with} }\theta=\frac{\pi}{4}&\small{bits}&\small{bits}&\textmd{ \small{successful}}\\\cline{2-4}
			v=0.71	&\mathcal{P}_{QKD}^{C_2}\textmd{ \small{with}}  &0.2777 & -0.0151& \\
			& \theta_1=\theta_2=\frac{\pi}{4}&\small{bits}&\small{bits}&\\\cline{2-4}
			\hline
			
		\end{array}
		\]
	\end{table}\label{table:ta1}
\textit{Concluding Remarks:} Present work can be considered as an eavesdropper aware analysis of a bipartite entanglement-assisted QKD protocol for cloning strategy adopted by the third party. We use Devetak-winter key rate as a tool for security analysis of such protocol. Such analysis provides with partial characterization of two-qubit states in context of designing QKD protocol from which we can distill secret key. By doing so we have been able to identify some useful family of states in this regard when only one of the transmitted qubits gets cloned. Completely characterizing the entire two-qubit space can be considered as a potential direction of future research. When both the transmitted qubits get cloned, many of the states that were useful for designing successful QKD under single-qubit cloning attacks, turn out to be useless. Providing a concrete formalism of distinguishing states useful under these two types of external attacks deserves future investigation. Also, one may consider some more general eavesdropping strategies so as to identify states which can be used to design successful Ekert-type protocols coming under those attacks. 

	\section*{Supplementary Material}
	\subsection{Mutual Information $I(A:B)$ In Terms Of Bloch Parameters}
	Let $A,$ $B$ share an arbitrary two-qubit state $\rho$(Eq.(\ref{st41})):
	\begin{eqnarray*}
		\rho&=&\small{\frac{1}{4}(\mathbb{I}_{2}\times\mathbb{I}_2+\vec{\mathbf{m}}.\vec{\sigma}\otimes \mathbb{I}_2+\mathbb{I}_2\otimes \vec{\mathbf{n}}.\vec{\sigma}+\sum_{i=1}^{3}t_{i}\sigma_{i}\otimes\sigma_{i})}.
	\end{eqnarray*}
	Here $\vec{\mathbf{m}}$$=$$(m_1,m_2,m_3)$ and $\vec{\mathbf{n}}$$=$$(n_1,n_2,n_3).$\\
	Here we derive the expression of $I(A:B)$ in terms of parameters of $\rho.$\\
	Now, as discussed in main text, each of the trusted parties can measure in any of the three MUBs: $X,Y$ or $Z$ basis. So:
	\begin{eqnarray}\label{info1}
		I(A:B)&=&\frac{1}{3} \sum_{j=1}^3[I_j(A:B)]
	\end{eqnarray}
	where $I_1,I_2,I_3$ corresponds to mutual information between $A,B$ when they both measure in $X,Y$ and $Z$ basis respectively. \\
	\textit{Expression of $I_1(A:B):$} Let us consider measurement in $X$ basis. As we have fixed the basis for time being, to simplify the notations for deriving the expression in $X$ basis we drop the suffix in writing the entropy and probability terms. So,
	\begin{eqnarray}\label{info2}
		I_1(A:B)=H(A)-H(A|B) &&\nonumber\\
		=-\sum_{i=0}^1 p(a=i)\log_2(p(a=i))+\sum_{j=0}(p(b=j)H(A|B=j))&&\nonumber\\
		=-\sum_{i=0}^1 p(a=i)\log_2(p(a=i))+\sum_{j=0}(p(b=j)\times &&\nonumber\\
		\sum_{i=0}^1p(a=i|b=j)\log_2(p(a=i|b=j))&&\nonumber\\
		=-\sum_{i=0}^1 p(a=i)\log_2(p(a=i))+&&\nonumber\\
		\sum_{\substack{i,\\j=0}}^1p(a=i)p(b=j|a=i)\log_2(L)\,\textmd{\small{where}}&&\nonumber\\
		L=	\frac{p(b=j|a=i)p((a=i))}{\sum_{k=0}^1p(b=j|a=k)p(a=k)}&&\nonumber\\
	\end{eqnarray}
	$H(.)$ denotes Shanon entropy\cite{nie}.\\
	Now, the probability terms are:
	\begin{eqnarray}\label{info3}
		p(a=j)&=&\frac{1+(-1)^j m_1}{2};\nonumber\\
		p(b=j|a=i)&=&\frac{1}{2}+\frac{(-1)^{j}n_1+(-1)^{i+j} t_1}{2(1+(-1)^i m_1)}
	\end{eqnarray}
	Using Eq.(\ref{info3}), on simplification Eq.(\ref{info2}) gives:
	\begin{eqnarray}\label{info4}
		I_1(A:B)=H(\frac{1+m_1}{2})-\nonumber\\
		\sum_{j=0}^1\frac{1+(-1)^jn_1}{2} H(\frac{1-t_1+(-1)^j(n_1-m_1)}{2(1+(-1)^jn_1)})
	\end{eqnarray}
	Similarly,for $Y$ basis measurement and $Z$ basis measurements, we get respectively:
	\begin{eqnarray}\label{info5}
		I_2(A:B)=H(\frac{1+m_2}{2})-\nonumber\\
		\sum_{j=0}^1\frac{1+(-1)^jn_2}{2} H(\frac{1-t_2+(-1)^j(n_2-m_2)}{2(1+(-1)^jn_2)})\nonumber\\
		\textmd{and}\qquad\qquad\qquad\nonumber\\
		I_3(A:B)=H(\frac{1+m_3}{2})-\nonumber\\
		\sum_{j=0}^1\frac{1+(-1)^jn_3}{2} H(\frac{1-t_3+(-1)^j(n_3-m_3)}{2(1+(-1)^jn_3)})
	\end{eqnarray}
	So, for any $\rho,$ $I(A:B)$ takes the form:
	\begin{eqnarray}\label{info6}
		I(A:B)&=&\frac{1}{3} \sum_{i=1}^3(H(\frac{1+m_i}{2})-\sum_{j=0}^1\frac{1+(-1)^jn_i}{2}\times\nonumber\\
		&& H(\frac{1-t_i+(-1)^j(n_i-m_i)}{2(1+(-1)^jn_i)}))
	\end{eqnarray}
	\subsection{Details For Single-Qubit Cloning Strategy}
	\subsubsection{Forms Of The States Involved}
	Let $\rho_{AB}$ be an arbitrary two-qubit state distributed from $\mathbf{S}:$
	\begin{eqnarray}\label{arb1}
		\rho_{AB}&=&\sum_{\substack{j,k,\\l,m=0}}^1a_{j,k,l,m}|jk\rangle\langle lm|
	\end{eqnarray}
	$E$ applies cloning operation(Eq.(\ref{cln1})) on $2^{nd}$ qubit of $\rho_{AB}.$
	The state $\rho_{ABE},$ resulting after above operation(Eq.(\ref{cln1})) is applied on $2^{nd}$ qubit of $\rho_{AB},$ takes the form:
	\begin{eqnarray}\label{arb2}
		\rho_{ABE}&=&\sum_{\substack{j,k,\\l,m=0}}^1a_{j,k,l,m}|j\rangle\langle l|\otimes (U_C|k0\rangle\langle m0|U^{\dag}_C)\nonumber\\
		&=&\sum_{\substack{j,k,\\l,m=0}}^1a_{j,k,l,m}|j\rangle\langle l|\otimes(\cos^2\frac{\theta}{2}|k0\rangle\langle m0|+\nonumber\\
		&&k\cdot m\sin^2\frac{\theta}{2}|k\oplus 11\rangle\langle m\oplus 11|+\frac{\sin2\theta}{2}\nonumber\\
		&&(k|k\oplus 11\rangle\langle m0|+m|k0\rangle\langle m\oplus 11|))
	\end{eqnarray}
	Trusted parties $A,B$ execute $\mathcal{P}_{QKD}$ using  $\rho_{AB}^{'}:$
	\begin{eqnarray}\label{arb3}
		\rho_{AB}^{'}&=&\sum_{i=0}^1\prescript{}{E}{\langle i}|\rho_{ABE}|i\rangle_{E}\nonumber\\ &=&\sum_{\substack{\small{j,k,}\\\small{l,m=0}}}^{\small{1}}a_{j,k,l,m}(\cos^2\frac{\theta}{2}|jk\rangle\langle lm|\nonumber\\
		&+&km\sin^2\frac{\theta}{2}|jk\oplus 1\rangle\langle lm\oplus 1|)
	\end{eqnarray}
	$E$ retains the state $\rho_E$ which has the form:
	\begin{eqnarray}\label{arb4}
		\rho_E&=&\sum_{i,j=0}^1\prescript{}{AB}{\langle ij}|\rho_{ABE}|ij\rangle_{AB}\nonumber\\
		&&	\sum_{j,l=0}^1\delta_{j,l}(\sum_{k=0}^12^{k\oplus 1}\cos^2(\frac{\pi k-\alpha}{2})|k\rangle\langle k|\nonumber\\
		&+&\frac{\sin(2\alpha)}{2}|k\rangle\langle k\oplus 1|)
	\end{eqnarray}
%The eigen values($\lambda_1,\lambda_2,$say) of $\rho_E,$  is used to compute its Von-Neumann entropy: $S(\rho_E)$$=$$-$$\sum_{i=0}^1\lambda_i\log_2(\lambda_i).$\\
%The bipartite state that remains shared between $E$ and $A$ is given by:
%	\begin{eqnarray}\label{arb5}
%		\rho_{AE}&=&\prescript{}{B}{\langle}\rho_{ABE}\rangle_{B}\nonumber\\ &=&....
%	\end{eqnarray}
	\subsubsection{Devetak-Winter Key Rate}
	Let us recall the expression of Devetak-Winter key rate(Eq.(\ref{wr7})):
	\begin{eqnarray*}
		r_D&=&\frac{1}{3}\sum_{j=1}^3(I_j(A:B)-\chi_j(A:E))
	\end{eqnarray*}
	Using Eq.(\ref{wr3}), $r_D$ becomes:
	\begin{eqnarray}\label{wa7}
		r_D&=&\frac{1}{3}\sum_{j=1}^3(I_j(A:B)-\chi_j(A:E))\nonumber\\
		&=&\frac{1}{3}\sum_{j=1}^3(I_j(A:B)-S(\rho_E)\nonumber\\
		&+&\sum_{a_j}p(a_j)S(\varrho_E^{(a_j)}))
	\end{eqnarray}
	$\forall j$$=$$1,2,3,$ $I_j(A:B)$ is computed when $A,B$ both measure their subsystems forming $\rho_{AB}^{'}$ in $j^{th}$ MUB. Being a two-qubit state, $\rho_{AB}^{'}$ has a Bloch sphere representation(Eq.(\ref{st41})). Denoting the local Bloch vectors of $\rho_{AB}^{'}$ as $\vec{\mathbf{m}^{'}}$$=$$(m_1^{'},m_2^{'},m_3^{'})$; $\vec{\mathbf{n}^{'}}$$=$$(n_1^{'},n_2^{'},n_3^{'})$; correlation tensor as $T^{'}$$=\textmd{diag}(t_1^{'},t_2^{'},t_3^{'})$ and using Eq.(\ref{info6}), we get:
	\begin{eqnarray}\label{info6i}
		I(A:B)&=&\frac{1}{3} \sum_{i=1}^3(H(\frac{1+m_i^{'}}{2})-\sum_{j=0}^1\frac{1+(-1)^jn_i^{'}}{2}\times\nonumber\\
		&& H(\frac{1-t_i^{'}+(-1)^j(n_i^{'}-m_i^{'})}{2(1+(-1)^jn_i^{'})}))
	\end{eqnarray}
	$S(\rho_E)$ is computed from $\rho_E$(retained by $E$) that does not depend on MUB chosen by $A:$
	\begin{eqnarray}\label{wa9}
		S(\rho_E)&=&-\sum_{i=1}^2\lambda_i\log_2(\lambda_i),
	\end{eqnarray}
	where $\lambda_1,\lambda_2$ are the eigen values of $\rho_E$(Eq.(\ref{arb4})).\\
	%$E$ maintains correlations with the trusted parties through the state $\rho_{ABE}.$ 
	$\forall j$$=$$1,2,3,$ when $A$ measures $\rho_{AE}$ in $j^{th}$ MUB and obtains output $a_j$ with probability $p(a_j),$ $E$ obtains conditional state $\rho_E^{(a_j)}.$ 
	\begin{eqnarray}\label{wa10}
		S(\rho_E^{(a_j)})&=&-\sum_{i=1}^2\lambda_i^{(a_j)}\log_2(\lambda_i^{(a_j)}),
	\end{eqnarray} 
	where $\lambda_1^{(a_j)},\lambda_2^{(a_j)}$ are the eigen values of $\rho_E^{(a_j)}$ with $a_j$$\in$$\{0,1\},$ $\forall j$$=$$1,2,3.$\\
	Using Eqs.(\ref{info6i},\ref{wa9},\ref{wa10}), $r_{D}$Eq.(\ref{wa7}) takes the form:
	\begin{eqnarray}\label{wa7i}
		r_D	&=&\frac{1}{3} \sum_{i=1}^3(H(\frac{1+m_i^{'}}{2})-\sum_{j=0}^1\frac{1+(-1)^jn_i^{'}}{2}\times\nonumber\\
		&& H(\frac{1-t_i^{'}+(-1)^j(n_i^{'}-m_i^{'})}{2(1+(-1)^jn_i^{'})}))+\sum_{i=1}^2\lambda_i\log_2(\lambda_i)\nonumber\\
		&-&\frac{1}{3}\sum_{j=1}^3\sum_{i=1}^2\lambda_i^{(a_j)}\log_2(\lambda_i^{(a_j)})
	\end{eqnarray}
	In particular, Holevo quantity $\chi,$ bounding $E^{'}$ information gain is given by:
	\begin{eqnarray}\label{wa7ii}
		\chi	&=&\frac{1}{3}\sum_{j=1}^3\sum_{i=1}^2\lambda_i^{(a_j)}\log_2(\lambda_i^{(a_j)})-\sum_{i=1}^2\lambda_i\log_2(\lambda_i).
	\end{eqnarray}
\subsubsection{Numerical Illustration}
Let us compute $r_D$ using above forms for a particular numerical example given to illustrate utility of  pure states in the main text. \\
Here $\rho_{AB}$ is a pure state:
\begin{eqnarray}\label{purae}
	\rho_{AB}&=&\frac{1}{2}( |00\rangle\langle 00|+ |11\rangle\langle 11|+ |00\rangle\langle 11|+ |11\rangle\langle 00|).
\end{eqnarray}
In $\mathcal{P}_{QKD}^{C_1}$ characterized by $\theta$$=$$\frac{\pi}{4},$  $\rho_{AB}^{'}$ is given by:
\begin{eqnarray}\label{ill1}
	\rho_{AB}^{'}&=&\frac{1}{2}(|00\rangle\langle 00|+\cos(\frac{\pi}{8})(|00\rangle\langle 11|+|11\rangle\langle 00|)\nonumber\\
	&+&\sin^2(\frac{\pi}{8})|10\rangle\langle 10|+\cos^2(\frac{\pi}{8})|11\rangle\langle 11|)
\end{eqnarray}
From Eq.(\ref{ill1}) we get $\vec{\mathbf{m}^{'}}$$=$$(0,0,0);$ $\vec{\mathbf{n}^{'}}$$=$$(0,0,0.146447)$ and $T$$=$$\textmd{diag}(0.92388,-0.92388,0.853553).$ \\
Using these Bloch parameters in Eq.(\ref{info6i}) we get 
\begin{eqnarray}
	I(A:B)&=& 0.739127.
\end{eqnarray}
Here
\begin{eqnarray}\label{ill2}
	\rho_{AE}&=&\frac{1}{2}(|00\rangle\langle 00|+\sin(\frac{\pi}{8})(|00\rangle\langle 11|+|11\rangle\langle 00|)\nonumber\\
	&+&\cos^2(\frac{\pi}{8})|10\rangle\langle 10|+\sin^2(\frac{\pi}{8})|11\rangle\langle 11|)
\end{eqnarray}

\begin{eqnarray}
	\rho_E&=&\frac{1+\cos^2(\frac{\pi}{8})}{2}|0\rangle\langle 0|+\frac{\sin^2(\frac{\pi}{8})}{2}|1\rangle\langle 1|.
\end{eqnarray}
Eigen values of $\rho_E$ are $\frac{1+\cos^2(\frac{\pi}{8})}{2}, \frac{\sin^2(\frac{\pi}{8})}{2}.$\\
When $A$ measures in $X$ basis:
\begin{eqnarray}
	p(a_1=0)=0.5&;&\rho_E^{(a_1=0)}=\left[ {\begin{array}{cc}
		0.926777	&0.191342\\
		0.191342&0.0732233\\		
	\end{array} } \right]\nonumber\\
	p(a_1=1)=0.5&;&\rho_E^{(a_1=1)}=\left[ {\begin{array}{cc}
			0.926777	&-0.191342\\
			-0.191342&0.0732233\\		
	\end{array} } \right]\nonumber
\end{eqnarray}
Eigen values of each of $\rho_E^{(a_1=0)}$ and  $\rho_E^{(a_1=1)}$ are $0.967707, 0.0322928.$\\
When $A$ measures in $Y$ basis:
\begin{eqnarray}
	p(a_2=0)=0.5&;&\rho_E^{(a_2=0)}=\left[ {\begin{array}{cc}
			0.926777	&0.191342\imath\\
			-0.191342\imath&0.0732233\\		
	\end{array} } \right]\nonumber\\
	p(a_2=1)=0.5&;&\rho_E^{(a_2=1)}=\left[ {\begin{array}{cc}
			0.926777	&-0.191342\imath\\
			0.191342\imath &0.0732233\\		
	\end{array} } \right]\nonumber
\end{eqnarray}
Eigen values of each of $\rho_E^{(a_2=0)}$ and  $\rho_E^{(a_2=1)}$ are again $0.967707, 0.0322928.$\\

When $A$ measures in $Z$ basis:
\begin{eqnarray}
	p(a_3=0)=0.5&;&\rho_E^{(a_3=0)}=\left[ {\begin{array}{cc}
		1	&0\\
			0&0\\		
	\end{array} } \right]\nonumber\\
	p(a_3=1)=0.5&;&\rho_E^{(a_3=1)}=\left[ {\begin{array}{cc}
		0.853553	&0\\
			0 &0.146447\\		
	\end{array} } \right]\nonumber
\end{eqnarray}
Eigen values of $\rho_E^{(a_3=0)}$ are $1,0$ and that of  $\rho_E^{(a_3=1)}$ are $0.853553, 0.146447.$\\
Using all the above details in Eq.(\ref{wa7i}) and Eq.(\ref{wa7ii}) we get 
\begin{eqnarray}
	r_D&=&0.59861\nonumber\\
	\chi&=&0.140517.
\end{eqnarray}
	\subsection{Proof Of Theorem.\ref{clo1}}
Here $\rho_{AB}$ is a two-qubit pure state(Eq.(\ref{pure})). $\rho_{AB}^{'}$(Eq.(\ref{arb3})) is given by:
\begin{eqnarray}\label{theo11}
	\rho_{AB}^{'}&=&\cos^2\alpha|00\rangle\langle 00|+\frac{1}{2}\cos(\frac{\theta}{2})\sin(2\alpha)(|00\rangle\langle 11|\nonumber\\
	&+&|11\rangle\langle 00|)+\sin^2\alpha(\sin(\frac{\theta}{2})|01\rangle\langle 01|\nonumber\\
	&+&\cos(\frac{\theta}{2})|10\rangle\langle 10|)
\end{eqnarray}
Bloch parameters of Eq.(\ref{theo11}) are:
\begin{eqnarray}\label{theo12}
	\vec{\mathbf{m}^{'}}&=&(0,0\cos(2\alpha));\nonumber\\
	\vec{\mathbf{n}^{'}}&=&(0,0\cos^2(\alpha)-\sin^2(\alpha)\cos\theta);\nonumber\\
	T^{'}&=&\textmd{\small{diag}}(\cos(\frac{\theta}{2})\sin(2\alpha),-\cos(\frac{\theta}{2})\sin(2\alpha),\nonumber\\
	&&\cos^2\alpha+\sin^2\alpha\cos\theta)
\end{eqnarray}
Using Eq.(\ref{theo12}), simplified form of $I(A:B)$ is given by:
\begin{eqnarray}\label{theor12}
	I(A:B)&=&\frac{1}{3}(2+H(\cos^2\alpha)\nonumber\\
	&-&\sum_{j=0}^1H(\frac{1+(-1)^j\sin(2\alpha)\cos(\frac{\theta}{2})}{2})\nonumber\\
	&-&(\cos^2\alpha\cos^2(\frac{\theta}{2})+\sin^2(\frac{\theta}{2}))\times\nonumber\\
	&&H(\frac{\sin^2\alpha\sin^2(\frac{\theta}{2})}{\cos^2\alpha\cos^2(\frac{\theta}{2})+\sin^2(\frac{\theta}{2})})) 
\end{eqnarray}
$\rho_E^{'}$s eigen values are $\cos^2\alpha+\cos^2(\frac{\theta}{2})\sin^2\alpha$ and $\sin^2(\frac{\theta}{2})\sin^2\alpha.$ So, 
\begin{eqnarray}\label{theoirs12}
	S(\rho_E)&=&H(\sin^2(\frac{\theta}{2})\sin^2\alpha)
\end{eqnarray}
\textit{When $A$ measures in $X$ basis:}
\begin{eqnarray}\label{theo13}
	P(a_1=j)&=&\frac{1}{2},\,\, j=0,1\nonumber\\
	\textmd{\small{Eigen values of }}\rho_E^{(a_1=0)}&=&\frac{8\pm\sqrt{2A}}{16}\nonumber\\
		\textmd{\small{Eigen values of }}\rho_E^{(a_1=1)}&=&\frac{8\pm\sqrt{2A}}{16}\nonumber\\
	\textmd{\small{with} }A&=&26+8\cos(2\alpha)-2\cos(4\alpha)\nonumber\\\
+\sum_{j=0}^1(-4\cos(2\alpha+(-1)^j2\theta)	&+&\cos(4\alpha+(-1)^j2\theta))\nonumber\\
&&+6\cos(2\theta).
\end{eqnarray}
\textit{When $A$ measures in $Y$ basis:}
\begin{eqnarray}\label{theo14}
	P(a_2=j)&=&\frac{1}{2},\,\, j=0,1\nonumber\\
	\textmd{\small{Eigen values of }}\rho_E^{(a_2=0)}&=&\frac{8\pm\sqrt{2A}}{16}\nonumber\\
		\textmd{\small{Eigen values of }}\rho_E^{(a_2=1)}&=&\frac{8\pm\sqrt{2A}}{16}\nonumber\\
	\textmd{\small{where} }A&&	\textmd{\small{is given by Eq.(\ref{theo13}).}}\nonumber\\	
\end{eqnarray}
\textit{When $A$ measures in $Z$ basis:}
\begin{eqnarray}\label{theo15}
	P(a_3=0)=\cos^2\alpha&;&P(a_3=1)=\sin^2\alpha\nonumber\\
	\textmd{\small{Eigen values of }}\rho_E^{(a_3=0)}&=&1,0\nonumber\\
	\textmd{\small{Eigen values of }}\rho_E^{(a_3=1)}&=&\cos^2(\frac{\theta}{2}),\,\sin^2(\frac{\theta}{2})
\end{eqnarray}
Using Eqs.(\ref{theor12}-\ref{theo15}), from Eq.(\ref{wa7i}), we get:
\begin{eqnarray}\label{theo16}
	r_D&=&f_1(\theta,\alpha)\,\,\textmd{\small{where}}\nonumber\\
	f_1(\theta,\alpha)&=&-H(\sin^2(\frac{\theta}{2})\sin^2\alpha)\nonumber\\
	&+&\frac{1}{3}(2+H(\cos^2\alpha)\nonumber\\
	&-&\sum_{j=0}^1H(\frac{1+(-1)^j\sin(2\alpha)\cos(\frac{\theta}{2})}{2})\nonumber\\
	&-&(\cos^2\alpha\cos^2(\frac{\theta}{2})+\sin^2(\frac{\theta}{2}))\times\nonumber\\
	&&H(\frac{\sin^2\alpha\sin^2(\frac{\theta}{2})}{\cos^2\alpha\cos^2(\frac{\theta}{2})+\sin^2(\frac{\theta}{2})}))\nonumber\\
	%&-&\nonumber\\
	&+&\frac{2}{3}H(\frac{8+\sqrt{2A}}{16})+\frac{\sin^2\alpha}{3}H(\cos^2(\frac{\theta}{2})),\nonumber\\
	\textmd{\small{with}}&&\textmd{\small{ A provided by Eq.\ref{theo14}}}.
\end{eqnarray}
Now for any value of $\theta,\alpha$$\in$$[0,\frac{\pi}{4}],$ $f_1(\theta,\alpha)$ is positive(see Fig.\ref{fig2}). So $r_D$$>$$0.$ This in turn proves the theorem.$\blacksquare$
	\begin{center}
	\begin{figure}
		\includegraphics[width=3in]{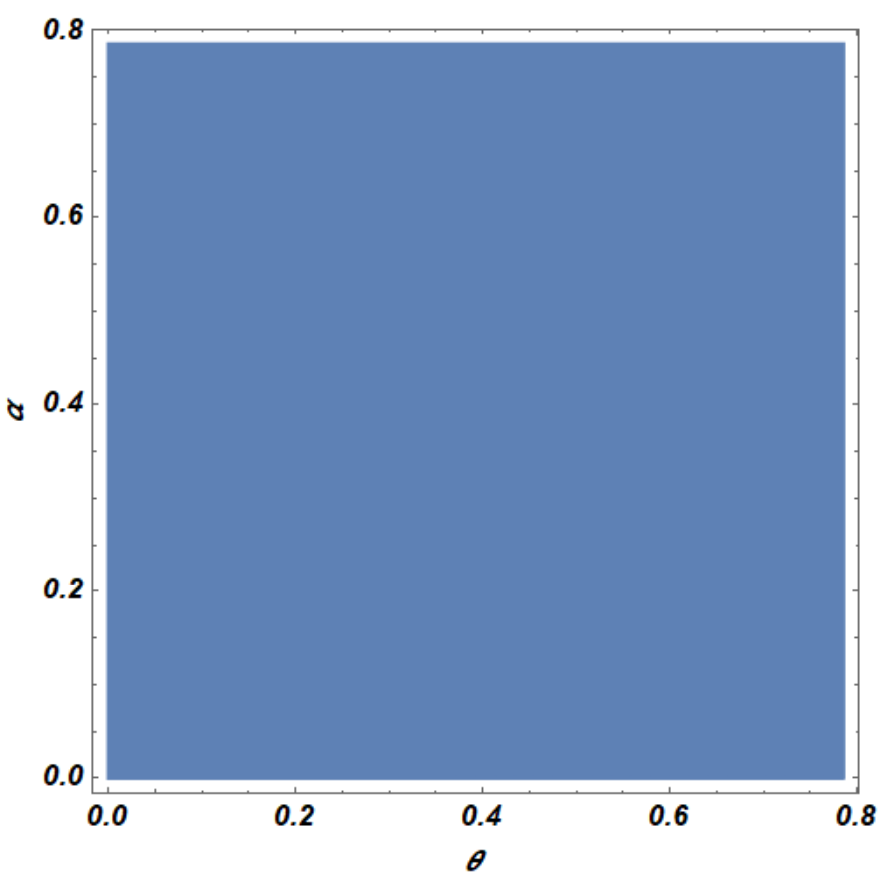} 
		\caption{\emph{At any point $(\theta,\alpha)$ in the shaded region,  $f_1(\theta,\alpha),$ given in Eq.(\ref{theo16}) is positive. Clearly the shaded region covers the entire parameter space formed by the parameter($\alpha$) of the pure state(Eq.(\ref{pure})) and the parameter($\theta$) characterizing the cloning operation(Eq.(\ref{cln1})).}}
		\label{fig2}
	\end{figure}
\end{center}
\subsection{Proof of Result Involving Werner Class of States}
This result can be proved following similar strategy as that followed for Theorem.\ref{clo1}.
Here $\rho_{AB}$ is the Werner family of states(Eq.(\ref{werner})). 
After cloning state shared between $A,B$ is given by:
\begin{eqnarray}\label{theoo11}
	\rho_{AB}^{'}&=&\frac{1}{4}(1-v+(1+v)\sin^2(\frac{\theta}{2})|00\rangle\langle 00|\nonumber\\
	&+&(1+v)\cos^2(\frac{\theta}{2})|01\rangle\langle 01|+(1-v)\sin^2(\frac{\theta}{2})|10\rangle\langle 10|\nonumber\\
	&+&(1-v)\cos^2(\frac{\theta}{2})|11\rangle\langle 11|)\nonumber\\
	&-&\frac{v\cos(\frac{\theta}{2})}{2}(|01\rangle\langle 10|+|10\rangle\langle 01|).
\end{eqnarray}
Bloch parameters of Eq.(\ref{theoo11}) are:
\begin{eqnarray}\label{theoo12}
	\vec{\mathbf{m}^{'}}&=&(0,0,0);\nonumber\\
	\vec{\mathbf{n}^{'}}&=&(0,0,\sin^2(\frac{\theta}{2}));\nonumber\\
	T^{'}&=&\textmd{\small{diag}}(-v\cos(\frac{\theta}{2}),-v\cos(\frac{\theta}{2}),-v\cos^2(\frac{\theta}{2})).\nonumber\\
\end{eqnarray}
Using Eq.(\ref{theoo12}), simplified form of $I(A:B)$ is given by:
\begin{eqnarray}\label{theoor12}
	I(A:B)&=&\frac{1}{3}(3-2H(\frac{1+v\cos(\frac{\theta}{2})}{2})-\frac{\cos^2(\frac{\theta}{2})}{2}H(\frac{1+v}{2})\nonumber\\
	&-&\frac{1+\sin^2(\frac{\theta}{2})}{2}H(\frac{1}{2}+\frac{v\cos^2(\frac{\theta}{2})}{2(1+\sin^2(\frac{\theta}{2}))}).
\end{eqnarray}
$\rho_E^{'}$s eigen values are $\frac{1+\cos^2(\frac{\theta}{2})}{2}$ and $\frac{\sin^2(\frac{\theta}{2})}{2}.$ So, 
\begin{eqnarray}\label{theors12}
	S(\rho_E)&=&H(\frac{\sin^2(\frac{\theta}{2})}{2})
\end{eqnarray}
\textit{When $A$ measures in $X$ basis:}
\begin{eqnarray}\label{theor13}
	P(a_1=j)&=&\frac{1}{2},\,\, j=0,1\nonumber\\
	\textmd{\small{Eigen values of }}\rho_E^{(a_1=0)}&=&\frac{1\pm\sqrt{B}}{2}\nonumber\\
	\textmd{\small{Eigen values of }}\rho_E^{(a_1=1)}&=&\frac{1\pm\sqrt{B}}{2}\nonumber\\
	\textmd{\small{with} }B&=&v^2\sin^2(\frac{\theta}{2})+\cos^4(\frac{\theta}{2})
\end{eqnarray}
\textit{When $A$ measures in $Y$ basis:}
\begin{eqnarray}\label{theor14}
	P(a_2=j)&=&\frac{1}{2},\,\, j=0,1\nonumber\\
	\textmd{\small{Eigen values of }}\rho_E^{(a_2=0)}&=&\frac{1\pm\sqrt{B}}{2}\nonumber\\
	\textmd{\small{Eigen values of }}\rho_E^{(a_2=1)}&=&\frac{1\pm\sqrt{B}}{2}\nonumber\\
	\textmd{\small{with} }B&=&v^2\sin^2(\frac{\theta}{2})+\cos^4(\frac{\theta}{2})
\end{eqnarray}

\textit{When $A$ measures in $Z$ basis:}
\begin{eqnarray}\label{theor15}
	P(a_3=0)&=&P(a_3=1)=\frac{1}{2}\nonumber\\
	\textmd{\small{Eigen values of }}\rho_E^{(a_3=0)}&=&\frac{3-v+\cos\theta+v\cos\theta}{4},\nonumber\\
	&&\frac{(1+v)\sin^2(\frac{\theta}{2})}{2}\nonumber\\
	\textmd{\small{Eigen values of }}\rho_E^{(a_3=1)}&=&\frac{3+v+\cos\theta-v\cos\theta}{4},\nonumber\\
	&&\frac{(1-v)\sin^2(\frac{\theta}{2})}{2}
\end{eqnarray}
Using Eqs.(\ref{theor12}-\ref{theor15}), from Eq.(\ref{wa7i}), we get:
\begin{eqnarray}\label{theor16}
	r_D&=&f_2(\theta,v)\,\,\textmd{\small{where}}\nonumber\\
	f_2(\theta,v)&=&-H(\frac{\sin^2(\frac{\theta}{2})}{2})+\frac{1}{3}(3-2\times\nonumber\\
	&&H(\frac{1+v\cos(\frac{\theta}{2})}{2})-\frac{\cos^2(\frac{\theta}{2})}{2}H(\frac{1+v}{2})\nonumber\\
	&-&\frac{1+\sin^2(\frac{\theta}{2})}{2}H(\frac{1}{2}+\frac{v\cos^2(\frac{\theta}{2})}{2(1+\sin^2(\frac{\theta}{2}))})\nonumber\\
	&+&2H(\frac{1+\sqrt{B}}{2})+\frac{1}{2}\sum_{j=0}^1 H(\frac{(1+(-1)^j v)\sin^2(\frac{\theta}{2})}{2}))\nonumber\\
		\textmd{\small{with}}&&\textmd{\small{B provided by Eq.\ref{theor14}}}.
\end{eqnarray}
Now for any value of $\theta$$\in$$[0,\frac{\pi}{4}]$ and $v$$\in$$[0,1],$ $f_2(\theta,v)$ is positive(see Fig.\ref{fig3}). So $r_D$$>$$0.$ This in turn proves the result.
\begin{center}
	\begin{figure}
		\includegraphics[width=2.8in]{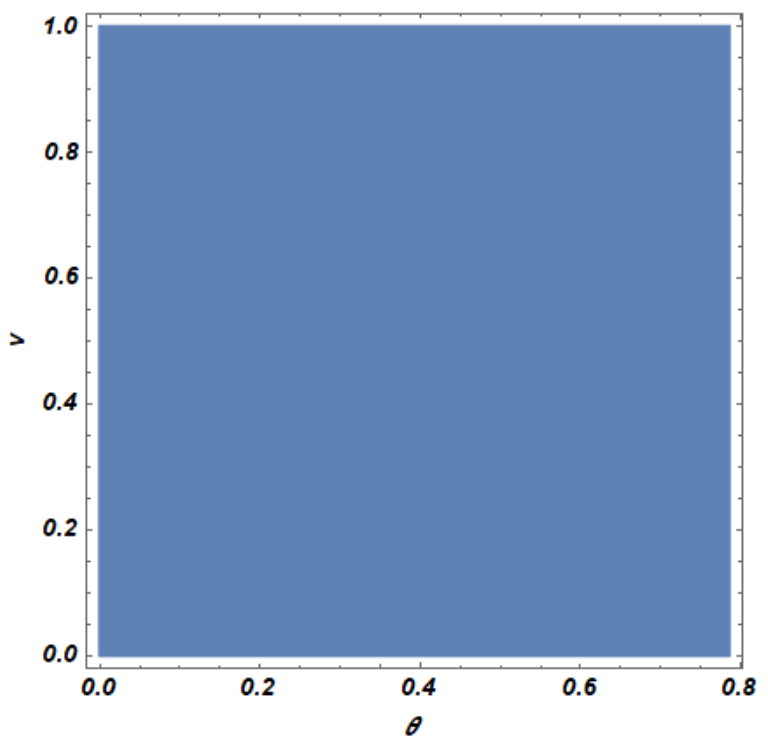} 
		\caption{\emph{At any point $(\theta,v)$ in the shaded region,  $f_2(\theta,v),$ given in Eq.(\ref{theor16}) is positive. Clearly the shaded region covers the entire parameter space formed by the parameter($v$) of Werner state(Eq.(\ref{werner})) and the parameter($\theta$) characterizing the cloning operation(Eq.(\ref{cln1})).}}
		\label{fig3}
	\end{figure}
\end{center}

\subsection{Details For Both Qubit Cloning Strategy}
	\subsubsection{Forms Of The States Involved}
	Here we discuss the detailed forms of the states used in $\mathcal{P}_{QKD}^{C_2}.$\\
	Let $U_C^{(\theta_1)},U_C^{(\theta_2)}$ be the cloners(Eq.(\ref{cln1})) applied by $E$ on $1^{st}$ and $2^{nd}$ qubits transmitted from $\mathbf{S}:$ 
	\begin{eqnarray}\label{clnb1}
		U_C^{(\theta_j)}(|i0\rangle)&=&\cos\frac{\theta_j}{2}	|i0\rangle+i\sin\frac{\theta_j}{2} |i\oplus 1 0\rangle,\\
		&&\quad \quad \theta_j\in[0,\frac{\pi}{4}];\, \,j=1,2.\nonumber
	\end{eqnarray}	
	After $E$ applies cloning operations(Eq.(\ref{clnb1})) on both qubits of an arbitrary two-qubit state $\rho_{AB}$(Eq.(\ref{arb1})) resulting $4$-qubit state $\rho_{AE_1BE_2}:$
	\begin{eqnarray}
		\rho_{AE_1BE_2}=\otimes_{j=1}^2U_C^{(\theta_j)}( \rho_{AB})\cdot (\otimes_{j=1}^2U_C^{(\theta_j)})^{\dagger}&&\nonumber\\
		=\otimes_{j=1}^2U_C^{(\theta_j)}(\sum_{\substack{i,j,\\k,l=0}}^1 a_{i,j,k,l}|ij\rangle\langle kl|)\cdot (\otimes_{j=1}^2U_C^{(\theta_j)})^{\dagger}&&\nonumber\\
		=\sum_{\substack{i,j,\\k,l=0}}^1a_{i,j,k,l}(
		U_C^{(\theta_1)}(|i\rangle\langle k|)(U_C^{(\theta_1)})^{\dagger}\otimes 
		U_C^{(\theta_2)}(|j \rangle\langle l|)(U_C^{(\theta_2)})^{\dagger} )&&\nonumber\\
		=\sum_{\substack{i,j,\\k,l=0}}^1a_{i,j,k,l}(\cos^{i+k}\frac{\theta_1}{2}|i0\rangle\langle k0|+i \cos^{k}\frac{\theta_1}{2}\sin\frac{\theta_1}{2}|i\oplus 1 i\rangle\langle k0|&&\nonumber\\
		+k \cos^i\frac{\theta_1}{2}\sin\frac{\theta_1}{2}|i0\rangle\langle k\oplus 1 k|+ik \sin^2\frac{\theta_1}{2}|i\oplus 1 i\rangle\langle k\oplus 1 k|)\otimes&&\nonumber\\
		(\sum_{\substack{j,\\l=0}}^1[\cos^{j+l}\frac{\theta_2}{2}|j0\rangle\langle l0|+j \cos^{l}\frac{\theta_2}{2}\sin\frac{\theta_2}{2}|j\oplus 1 j\rangle\langle l0|&&\nonumber\\
		+l \cos^j\frac{\theta_2}{2}\sin\frac{\theta_2}{2}|j0\rangle\langle l\oplus 1 l|+jl \sin^2\frac{\theta_2}{2}|j\oplus 1 j\rangle\langle l\oplus 1 l|)&&\nonumber\\
	\end{eqnarray}
	Here $E$ retains a two-qubit state $\rho_{E_1E_2}$$=$$\sum_{m,n=0}^1\prescript{}{AB}{\langle} mn|\cdot \rho_{AE_1BE_2}\cdot|mn\rangle_{AB}$ whereas the trusted parties share another two-qubit state $\rho_{AB}^{''}$$=$$\sum_{m,n=0}^1\prescript{}{E_1E_2}{\langle} mn|\cdot \rho_{AE_1BE_2}\cdot|mn\rangle_{E_1E_2}.$
\subsubsection{Devetak-Winter Key Rate}
Proceeding as in the case of single-qubit cloning strategy, we get the following expression of Devetak-Winter Key Rate $r_D:$ 
	\begin{eqnarray}\label{wd7i}
	r_D	&=&\frac{1}{3} \sum_{i=1}^3(H(\frac{1+m_i^{''}}{2})-\sum_{j=0}^1\frac{1+(-1)^jn_i^{''}}{2}\times\nonumber\\
	&& H(\frac{1-t_i^{''}+(-1)^j(n_i^{''}-m_i^{''})}{2(1+(-1)^jn_i^{''})}))+\sum_{i=1}^4\nu_i\log_2(\nu_i)\nonumber\\
	&-&\frac{1}{3}\sum_{j=1}^3\sum_{i=1}^2\nu_i^{(a_j)}\log_2(\nu_i^{(a_j)})
\end{eqnarray}
and Holevo quantity $\chi$:
\begin{eqnarray}\label{wd7ii}
	\chi	&=&\frac{1}{3}\sum_{j=1}^3\sum_{i=1}^4\nu_i^{(a_j)}\log_2(\nu_i^{(a_j)})-\sum_{i=1}^4\nu_i\log_2(\nu_i).
\end{eqnarray}
In above forms:
\begin{itemize}
	\item $\vec{\mathbf{m}^{''}}$$=$$(m_1^{''},m_2^{''},m_3^{''})$; $\vec{\mathbf{n}^{''}}$$=$$(n_1^{''},n_2^{''},n_3^{''})$ are local Bloch vectors of $\rho_{AB}^{''}$(Eq.(\ref{rhoab}))
	\item correlation tensor as $T^{''}$$=\textmd{diag}(t_1^{''},t_2^{''},t_3^{''})$ is correlation tensor of $\rho_{AB}^{''}$(Eq.(\ref{rhoab}))
	\item $\nu_1,\cdots,\nu_4$ are eigen values of two-qubit state $\rho_{E_1E_2}$(Eq.(\ref{rhoe1e2})).
	\item $\nu_1^{(a_j)},\cdots,\nu_4^{(a_j)}$ are the eigen values of $E^{'}$s two-qubit conditional state $\rho_{E_1E_2}^{(a_j)}$ conditioned on $A$ obtaining output $a_j$(with probability $p(a_j)$) while measuring her local subsystem in $j^{th}$ basis. 
\end{itemize}

	\begin{comment}
		Ideally pure entangled state 
		fails any one of the validation checks. However, even if $\mathcal{P}_{QKD}$ passes both the checks, it can still 
		The state($\rho_E$) retained by $E$ is given by:
		\begin{eqnarray}
			\rho_E&=&
		\end{eqnarray}
		The two stages of one-way key distillation can therefore be viewed as
		\begin{itemize}
			\item \textbf{Information reconciliation(error correction):}
			Alice communicates suitable classical information to Bob so that
			Bob can correct discrepancies between their raw keys. This incurs
			an information cost, characterized by the conditional entropy
			\(H(A|B)\).

			\item \textbf{Privacy amplification:}
			Alice and Bob apply a suitable public hash function to their
			reconciled keys, shortening them so that Eve's residual information
			is sufficiently suppressed.

		\end{itemize}
	\end{comment}
\end{document}